\title{Top-$k$ Overlapping Densest Subgraphs: Approximation and Complexity}
\author{Riccardo Dondi}{Universit\`a degli Studi di Bergamo, Bergamo, Italy}{riccardo.dondi@unibg.it}{}{}
\author{Mohammad Mehdi Hosseinzadeh}{Universit\`a degli Studi di Bergamo, Bergamo, Italy}{m.hosseinzadeh@unibg.it}{}{}
\author{Giancarlo Mauri}{Universit\`a degli Studi di Milano-Bicocca, 
Milano, Italy}{mauri@disco.unimib.it}{}{} 
\author{Italo Zoppis}{Universit\`a degli Studi di Milano-Bicocca, 
Milano, Italy}{zoppis@disco.unimib.it}{}{}
\authorrunning{R. Dondi, M. M. Hosseinzadeh, G. Mauri, I. Zoppis} 
\keywords{Graph algorithms, Graph mining, Densest subgraph, Approximation algorithms}
\newtheorem{problem}{Problem}
\newtheorem{Applemma}{Lemma}
\newtheorem{Apptheorem}{Theorem}
\newtheorem{property}{Property}
\mathchardef\mhyphen="2D
\newcommand{\TOPK}{{\sf Top-k-Overlapping Densest Subgraphs}}
\newcommand{\TOPT}{{\sf Top-3-Overlapping Densest Subgraphs}}
\newcommand{\DENSEK}{\ensuremath{\mathsf{Densest\mhyphen Distinct\mhyphen Subgraph}}}
\newcommand{\DENGR}{\ensuremath{\mathsf{Densest \mhyphen Subgraph}}}
\begin{document}

\maketitle


\begin{abstract}
A central problem in graph mining is finding dense subgraphs, 
with several applications in different fields,
a notable example being identifying communities. 
While a lot of effort 
has been put in the problem of finding a single dense subgraph, 
only recently
the focus has been shifted to the problem of finding a set of densest subgraphs.
An approach introduced to find possible overlapping subgraphs is the 
{\sf Top-k-Overlapping Densest Subgraphs} problem. 
Given an integer $k \geq 1$, the goal of this problem 
is to find a set of $k$ densest subgraphs
that may share some vertices. 
The objective function to be maximized takes into account 
both the density of the subgraphs and the distance between subgraphs 
in the solution.
The {\sf Top-k-Overlapping Densest Subgraphs} problem has been shown to admit
a $\frac{1}{10}$-factor approximation algorithm. 
Furthermore, the computational complexity
of the problem has been left open. 
In this paper, we present contributions concerning the approximability
and the computational complexity of the problem.
For the approximability, we present approximation algorithms that improve the
approximation factor to $\frac{1}{2}$, when $k$ is smaller than the number of vertices in the graph,
and to $\frac{2}{3}$, when $k$ is a constant.
For the computational complexity,
we show that the problem is NP-hard even when $k=3$.
\end{abstract}

\section{Introduction}
\label{sec:Introduction}

One of the most studied and central problems in 
graph mining is the identification
of cohesive subgraphs. This problem has been raised in several contexts,
from social network analysis \cite{DBLP:journals/cn/KumarRRT99} to finding functional motifs in biological networks \cite{bioinfo2006}. 
Different definitions of cohesive graphs have been proposed and applied
in literature. One of the most remarkable example is clique, and finding a maximum size clique is a well-known
and studied problem in theoretical computer science \cite{DBLP:conf/coco/Karp72}. 

Most of the definitions of cohesive subgraph lead to NP-hard problems, 
in some cases even hard to approximate. 
For example, finding a clique of maximum size in a graph $G=(V,E)$
is an NP-hard problem \cite{DBLP:conf/coco/Karp72} and it is even hard to approximate within factor $O(|V|^{1 - \varepsilon})$, for each $\varepsilon > 0$ \cite{DBLP:journals/toc/Zuckerman07}. 
A definition of dense subgraph that leads to a polynomial-time algorithm
is that of average-degree density.  
For this problem, called {\sf Densest Subgraph},
 Goldberg gave
an elegant polynomial-time algorithm \cite{Goldberg:1984:FMD:894477}.
Furthermore, a linear-time greedy algorithm that achieves an approximation 
factor of $\frac{1}{2}$ for {\sf Densest Subgraph} has been given in \cite{DBLP:conf/swat/AsahiroITT96,DBLP:conf/approx/Charikar00}.

The {\sf Densest Subgraph}
 problem aims at finding a single
subgraph, but in many applications it
is of interest finding a collection
of dense subgraphs of a given graph.
More precisely, it is interesting to  compute 
a collection of subgraphs having maximum density in a given graph.
A recent approach proposed in \cite{DBLP:journals/datamine/GalbrunGT16}
asks for a collection of top $k$ densest, possibly
overlapping, subgraphs (denoted as \TOPK{}), 
since in many real-world cases dense subgraphs are related
to non disjoint communities. 
As pointed out in \cite{DBLP:journals/im/LeskovecLDM09,DBLP:journals/datamine/GalbrunGT16},
for example hubs are vertices that may be
part of several communities and hence of several
densest subgraphs, thus motivating the quest for overlapping distinct subgraphs.
\TOPK{}, proposed in \cite{DBLP:journals/datamine/GalbrunGT16},
addresses this problem by 
looking for a collection of $k$ subgraphs that maximize an objective function that takes 
into account both the density of the subgraphs and the distance between the subgraphs
of the solution, thus allowing an overlap
between the subgraphs which depends 
on a parameter $\lambda$.
When $\lambda$ is small, 
then
the density plays a dominant role in the
objective function, so the output subgraphs
can share a significant part of vertices. 
On the other hand, if $\lambda$ is large, 
then the subgraphs will share few or no
vertices, so the subgraphs may be disjoint.


An approach similar to \TOPK{}
was proposed in \cite{DBLP:conf/wsdm/BalalauBCGS15},
where the goal is to find a set of $k$ subgraphs
of maximum density, such that the maximum pairwise
Jaccard coefficient of the subgraphs
is bounded.
A dynamic variant  of the problem, whose goal is finding a set of $k$ disjoint subgraphs,
has been recently considered in \cite{DBLP:conf/cikm/NasirGMG17}.




\TOPK{} has been shown to be approximable within factor $\frac{1}{10}$ \cite{DBLP:journals/datamine/GalbrunGT16},
while its computational complexity has been left open
\cite{DBLP:journals/datamine/GalbrunGT16}.
In this paper, we present algorithmic and complexity results for \TOPK{} when $k$ is less than the number of 
vertices in the graph. 
This last assumption (required in Section \ref{sec:Approximation})
is reasonable, for example notice 
that in the experimental results presented 
in \cite{DBLP:journals/datamine/GalbrunGT16}
$k$ is equal to $20$, even for graphs having thousands or millions of vertices.
Concerning the
approximation of the problem, we provide
in Section \ref{sec:Approximation}
a $\frac{2}{3}$-approximation algorithm when $k$ is a
constant, and we present a
$\frac{1}{2}$-approximation algorithm when $k$ is smaller
than the size of the vertex set. 
From the computational complexity point of view, 
we show in Section \ref{sec:complexity} that 
{\sf Top-k Overlapping Densest Subgraphs}
 is NP-hard even if $k=3$  (that is we ask for three densest subgraphs), 
when $\lambda = 3|V|^3$, for an input
graph $G=(V,E)$.
We conclude the paper in Section \ref{sec:conclusion} with some open problems.
Some of the proofs and the pseudocode of some algorithms are omitted due to page limit.

\section{Definitions}
\label{sec:Def}

In this section, we present some definitions that will be useful in the rest of the paper. 
Moreover, we provide the formal definition of the problem we are interested in.

All the graphs we consider in this paper are undirected.
Given a graph $G=(V,E)$, and a set $V' \subseteq V$, 
we denote by $G[V']$ the \emph{subgraph} of $G$ induced
by $V'$, formally $G[V']=(V',E')$, where $E'$ is defined as follows:
$
E'= \{ \{ u,v \}: \{ u,v \} \in E \wedge u,v \in V'  \}
$.
If $G[V']$ is a subgraph of $G[V'']$, with $V' \subseteq V'' \subseteq V$,
then $G[V'']$ is 
a \emph{supergraph} of $G[V']$.
$G[V']$ is a proper subgraph of $G[V'']$, if $V' \subset V'' \subseteq V$;
in this case $G[V'']$ is a proper supergraph of $G[V']$.
A subgraph $G[V']$ of $G$ is a \emph{singleton}, if $|V'|=1$.

Given a subset $U \subseteq V$, we denote by $E(U)$ the set of edges of
$G$ having both endpoints in $U$. Moreover, we denote
by $E(V_1,V_2)$, with $V_1 \subseteq V'$, $V_2 \subseteq V'$ 
and $V_1 \cap V_2 = \emptyset$, 
the set of edges having exactly one endpoint in $V_1$ and
exactly one endpoint in $V_2$, formally
$
E(V_1,V_2)=\{\{u,v\}: u\in V_1, v \in V_2 \}
$.
Two subgraphs $G[V_1]$ and $G[V_2]$ of a graph $G=(V,E)$ are called \emph{distinct} when $V_1 \neq V_2$.

Next, we present the definition of \emph{crossing subgraphs},
which is fundamental in Section~\ref{subsec:approxNotConstant}.

\begin{definition}
Given a graph $G=(V,E)$, let
$G[V_1]$ and $G[V_2]$ be two subgraphs
of $G=(V,E)$. $G[V_1]$ and $G[V_2]$ are 
\emph{crossing} when $V_1 \cap V_2 \neq \emptyset$, $V_1 \setminus V_2 \neq \emptyset$
and $V_2 \setminus V_1 \neq \emptyset$ (notice that $V_1 \nsubseteq V_2$ and $V_2 \nsubseteq V_1$).
\end{definition}

Consider the example of Fig. \ref{fig:DenseSubgraph}. 
The two subgraphs induced by $\{ v_5,v_6, v_7, v_8, v_9, v_{10} \}$
and $\{ v_1, v_2, v_3, v_4, v_5 \}$ are crossing, while
the two subgraphs induced by $\{ v_5,v_6, v_7, v_8, v_9, v_{10} \}$
and $\{ v_5,v_6, v_7, v_8, v_9 \}$ are not crossing.


Now, we present the definition of density of a subgraph.

\begin{definition}
Given a graph $G=(V,E)$ and a subgraph $G[V']=(V',E')$,
with $V' \subseteq V$, the density of $G[V']$,
denoted by $dens(G[V'])$, is defined as 
$
dens(G[V'])=\frac{|E'|}{|V'|}
$.
\end{definition}

A \emph{densest subgraph} of a graph $G=(V,E)$ is a 
subgraph $G[U]$, with $U \subseteq V$, 
that maximizes $dens(G[U])$, among the subgraphs
of $G$. 
In the example of Fig. \ref{fig:DenseSubgraph}
the subgraph induced by $\{ v_5, v_6, v_7, v_8, v_9, v_{10} \}$ is the densest subgraph
and has density $\frac{11}{6}$.

Given a graph $G=(V,E)$ and 
a collection of subgraphs $\mathcal{W}= \{G[W_1], 
\dots, G[W_k] \}$
where each $G[W_i]$ is a subgraph of $G$, that is $W_i \subseteq V$, with $1 \leq i \leq k$, then the density of $\mathcal{W}$,
denoted by $dens(\mathcal{W})$, 
is defined as 
$
dens(\mathcal{W}) = \sum_{i=1}^{k} dens (G[W_i])
$. 

The goal of the problem we are interested in
is to find a collection of $k$, with $1 \leq k < |V|$, 
possibly overlapping subgraphs having high density. However, 
allowing overlap
leads to a solution that may contain $k$ copies of
the same subgraph.
To address such an issue, in \cite{DBLP:journals/datamine/GalbrunGT16}
a distance function between subgraphs of the collection is included in 
the objective function (to be maximized). We present here the distance
function between two subgraphs presented in 
\cite{DBLP:journals/datamine/GalbrunGT16}.

\begin{definition}
Given a graph $G=(V,E)$ and two subgraphs $G[U]$, $G[Z]$, 
with $U,Z \subseteq V$, define the distance function 
$d: 2^{G[V]} \times 2^{G[V]} \rightarrow \mathbb{R_{+}}$ 
between $G[U]$ and $G[Z]$
as follows:

\begin{equation*}
d(G[U],G[Z]) = \left\{
\begin{array}{ll}
2-\frac{|U \cap Z|^2}{|U||Z|} & \text{if } U \neq Z,\\
0 & \text{else}.
\end{array} \right.
\end{equation*}
\end{definition}

Notice that $d(G[U],G[Z]) \leq 2$, for each 
$U,Z \subseteq V$.

Now, we are able to define the problem we are interested in.

\begin{problem}\TOPK{} \\
\noindent
\textbf{Input:} a graph $G=(V,E)$, a parameter $\lambda > 0$. \\
\textbf{Output:} a set $\mathcal{W} = \{ G[W_1], \dots , G[W_k] \}$
of $k$ subgraphs, with $1 \leq k <|V|$  and $W_i \subseteq V$,
$1 \leq i \leq k$,
that maximizes the following value
\[
r(\mathcal{W})= dens(\mathcal{W})+ \lambda \sum_{i=1}^{k-1} \sum_{j=i+1}^k d(G[W_i],G[W_j]) 
\]
\end{problem}
We assume in what follows that $|V| > 5$ (it is required in the proof of Lemma \ref{lem:secondPartCorrect}). Notice that,
when $|V| \leq 5$, \TOPK{} can be solved optimally
in constant time. 




\begin{figure}
\centering
\includegraphics[scale=.45]{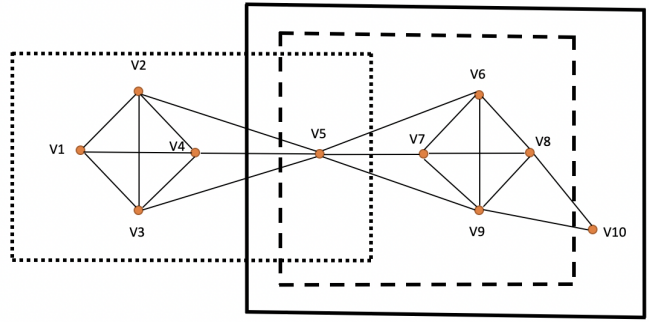}
\caption{A graph and a solution $\mathcal{W}$ of \TOPK{}, for $k=3$, consisting of the three subgraphs
included in boxes.
}
\label{fig:DenseSubgraph}
\end{figure}

\subsection{Goldberg's Algorithm and Extended
Goldberg's Algorithm}

Goldberg's Algorithm \cite{Goldberg:1984:FMD:894477} computes 
in polynomial time an optimal solution for the
\textsf{Densest-Subgraph}
problem that, 
given as input a graph $G=(V,E)$, asks for a 
subgraph $G[V']$ in $G$ having maximum density. 
Goldberg's Algorithm reduces $\DENGR$ to the problem 
of computing a minimum cut in a weighted auxiliary graph 
computed
by adding two vertices $s$ (the sink) and $t$ (the target) 
to $V$, where both $s$ and $t$ are connected
to every vertex of $V$. 
The time complexity of Goldberg's Algorithm is $O(|V||E|\log(|V|^2/|E|))$ applying
parametric flow algorithm \cite{DBLP:journals/siamcomp/GalloGT89}.
Given a graph $G=(V,E)$ and a subgraph $G[V']$,
with $V' \subseteq V$, we denote by 
\textit{Dense-Subgraph}$(G[V'])$ a densest subgraph
in $G[V']$, which can be computed with Goldberg's Algorithm. 

In this paper, we consider also a modification 
of Goldberg's Algorithm given in \cite{Zou2013}.
We refer to this algorithm as the Extended
Goldberg's Algorithm.
Extended Goldberg's Algorithm \cite{Zou2013} 
addresses a constrained variant
of $\DENGR$, where some vertices are forced to be in a densest subgraph,
that is we want to compute a densest subgraph 
$G[V']$ constrained
to the fact that a set $S \subseteq V'$.
We denote
by $Dense\mhyphen Subgraph(G[V'],C(S))$ a densest 
subgraph of $G[V']$ that is forced to contain $S$, 
where 
$S$ is called the \emph{constrained set} of $Dense\mhyphen Subgraph(G[V'],C(S))$.
Notice that 
\textit{Dense-Subgraph}$(G[V'],C(S))$ 
can be computed with the Extended Goldberg's Algorithm 
in time $O(|V||E|\log(|V|^2/|E|))$ \cite{DBLP:journals/siamcomp/GalloGT89,Zou2013}.

\section{Approximating \TOPK{}}
\label{sec:Approximation}

In this section, we present a 
$\frac{2}{3}$-approximation  algorithm for \TOPK{} when $k$ 
is a constant and a $\frac{1}{2}$-approximation algorithm
when $k$ is not a constant.
First, the two approximation algorithms compute a densest subgraph of $G$, 
denoted by $G[W_1]$.
Then, they 
iteratively compute a solution for   
an intermediate problem, called 
{\sf Densest-Distinct-Subgraph}.
When $k$ is constant we are able to solve the $\DENSEK$ problem
in polynomial time, while for 
general $k$ we are able to provide a 
$\frac{1}{2}$-approximation algorithm
for it.

First, we introduce the $\DENSEK$ problem, 
then we present the two
approximation algorithms and the analysis of their approximation factors.


\begin{problem}\DENSEK \\
\noindent
\textbf{Input:} a graph $G=(V,E)$ and a set $\mathcal{W} = \{ G[W_1], \dots , G[W_t] \}$,
with $1 \leq t \leq k-1$,
of subgraphs of $G$.\\ 
\textbf{Output:} a subgraph $G[Z]$ of $G$ such that $Z \neq W_i$, 
for each $1 \leq i \leq t$, and $dens(G[Z])$ is maximum.
\end{problem}

Notice that $\DENSEK$ is not identical to 
compute a densest subgraph of $G$, 
as we need to ensure that the returned subgraph $G[Z]$ is 
distinct from any subgraph in $\mathcal{W}$. 
Moreover, notice that we assume that $|\mathcal{W}| \leq k-1$, since if 
$|\mathcal{W}| = k$ we already have $k$ 
subgraphs in our solution of \TOPK{}.




\subsection{Approximation for Constant $k$}
\label{subsec:ApproxConstant}

First,  we show 
that $\DENSEK$ is polynomial-time 
solvable when $k$ is a constant, 
then we show how to obtain an approximation algorithm
for \TOPK{} 
by iteratively solving $\DENSEK$ and by combining
this solution with one that consists of $k$ singletons.

\subsubsection{A Polynomial-Time Algorithm for $\DENSEK$\\} 

We start by proving a property of solutions of $\DENSEK$. 

\begin{lemma}
\label{lem:DenseK} 
Consider a graph $G=(V,E)$ and a set $\mathcal{W} = \{ G[W_1], \dots , 
G[W_t]\}$, $1 \leq t \leq k-1$,
of subgraphs of $G$. Given a subgraph $G[Z]$ distinct from the subgraphs in
$\mathcal{W}$, there exist at most $t$ 
vertices $u_1, \dots , u_{t}$,
with $u_i \in W_i$, $1 \leq i \leq t$,
that can be partitioned into two sets $U_1$, $U_2$
such that $Z \supseteq U_1$, 
$Z \cap U_2 = \emptyset$ and there is no $G[W_j]$ in $\mathcal{W}$, with $1 \leq j \leq t$,
such that $W_j \supseteq U_1$ and $W_j \cap U_2 = \emptyset$.
\end{lemma}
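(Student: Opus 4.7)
The plan is to construct the partition $(U_1,U_2)$ greedily, using one "witness of difference" per subgraph $G[W_i]$. Since $Z$ is distinct from $W_i$, at least one of $Z \setminus W_i$ and $W_i \setminus Z$ is nonempty; the choice of which to use determines whether the witness goes into $U_1$ (used to break inclusion) or $U_2$ (used to break disjointness from $Z$).

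More concretely, I would iterate over $i = 1, \dots, t$ and define $u_i$ as follows. If $Z \setminus W_i \neq \emptyset$, pick any $u_i \in Z \setminus W_i$ and put it in $U_1$; otherwise $Z \subsetneq W_i$ (proper because $Z \neq W_i$), so pick any $u_i \in W_i \setminus Z$ and put it in $U_2$. This assigns at most one vertex per index $i$, so $|U_1 \cup U_2| \leq t$. By construction every element placed in $U_1$ belongs to $Z$, giving $U_1 \subseteq Z$, and every element placed in $U_2$ lies outside $Z$, giving $U_2 \cap Z = \emptyset$.

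It remains to verify the last requirement: no $W_j \in \mathcal{W}$ simultaneously satisfies $W_j \supseteq U_1$ and $W_j \cap U_2 = \emptyset$. For each $j \in \{1,\dots,t\}$, look at the witness $u_j$ that was created when processing $W_j$. If $u_j$ was placed in $U_1$, then by construction $u_j \notin W_j$, so $U_1 \not\subseteq W_j$; if instead $u_j$ was placed in $U_2$, then $u_j \in W_j$, so $W_j \cap U_2 \neq \emptyset$. In either case, at least one of the two conditions fails for $W_j$, which is exactly what is required.

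I do not expect a genuine obstacle here: the argument is essentially a direct witness-extraction combined with the observation that the witness $u_j$ designed against $W_j$ continues to work even as witnesses for other indices are added to $U_1$ or $U_2$, because enlarging $U_1$ can only make $W_j \supseteq U_1$ harder, and enlarging $U_2$ can only make $W_j \cap U_2 = \emptyset$ harder. The only subtlety worth stating explicitly in the write-up is that the $u_i$'s are indexed by $i$ but need not be pairwise distinct as vertices, which is why the lemma phrases the bound as \emph{at most} $t$ vertices.
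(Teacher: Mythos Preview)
Your proposal is correct and follows essentially the same approach as the paper: for each $W_j$ you pick a witness $u_j$ in the symmetric difference $Z \triangle W_j$, putting it in $U_1$ if $u_j \in Z \setminus W_j$ and in $U_2$ if $u_j \in W_j \setminus Z$, which is exactly what the paper does. Your write-up is in fact more careful than the paper's---you explicitly verify the final ``no $W_j$'' clause and note the possibility of repeated witnesses---whereas the paper simply concludes ``by construction, the two sets $U_1$ and $U_2$ satisfy the lemma.''
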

\begin{proof}
Consider $G[Z]$ and a subgraph $G[W_j]$,
$1 \leq j \leq t$. Since $G[Z]$ is distinct
from $G[W_j]$, it follows that
there exists a vertex $u_j \in V$ such that 
$u_j \in Z \setminus W_j$ (in this case $u_j \in U_1$)
or $u_j \in W_j \setminus Z$ (in this case $u_j \in U_2$),
otherwise $Z$ is identical to $W_j$.
By construction, the two sets $U_1$ and $U_2$ satisfy the lemma.
\end{proof}
Next, based on Lemma \ref{lem:DenseK}, we show how to compute an optimal solution of \textsf{Dense-Distinct-Subgraph}, when $k$ is a constant.
Algorithm \ref{algo:approx3} iterates 
over each subset $\{ u_1,\dots, u_t \}$ of at most $t$ vertices (recall that $|\mathcal{W}|=t$) and 
over the subsets 
$U_1, U_2 \subseteq \{ u_1,\dots, u_t \}$,
such that $U_1 \uplus U_2 = \{ u_1,\dots, u_t \}$. 
Algorithm \ref{algo:approx3} computes a
densest subgraph $G[Z]$ of $G$, 
with constrained set $U_1$ 
and with $Z \cap U_2 = \emptyset$, 
such
that there is no subgraph of $\mathcal{W}$ that contains
$U_1$ and whose set of vertices is disjoint from $U_2$.
Algorithm \ref{algo:approx3} applies the Extended Goldberg's algorithm
on the subgraph $G[ V \setminus U_2]$, with constrained set $C(U_1)$.
We prove the correctness
of Algorithm \ref{algo:approx3} in the next theorem.

\begin{theorem}
\label{teo:OptDenseK}
Let $G[Z]$ be the solution returned by 
Algorithm \ref{algo:approx3}.
Then, an optimal solution $G[Z']$ of $\DENSEK$ over instance
$(G, \mathcal{W})$
has density
at most $dens(G[Z])$.
\end{theorem}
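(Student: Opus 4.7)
The plan is to combine Lemma \ref{lem:DenseK}, which characterizes how an optimal $\DENSEK$ solution must differ from the subgraphs in $\mathcal{W}$, with the fact that Algorithm \ref{algo:approx3} exhaustively enumerates all such characterizing configurations. Since $k$ (and hence $t \leq k-1$) is constant, the number of subsets $\{u_1,\dots,u_t\} \subseteq V$ and the number of partitions of such a set into $(U_1,U_2)$ is polynomial, so the enumeration is affordable and the algorithm runs in polynomial time.

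Let $G[Z']$ be an optimal solution of $\DENSEK$ on $(G,\mathcal{W})$. I would first invoke Lemma \ref{lem:DenseK} to obtain vertices $u_1,\dots,u_t$ (one per subgraph of $\mathcal{W}$) together with a partition $(U_1,U_2)$ such that $Z' \supseteq U_1$, $Z' \cap U_2 = \emptyset$, and no $G[W_j] \in \mathcal{W}$ simultaneously satisfies $W_j \supseteq U_1$ and $W_j \cap U_2 = \emptyset$. Next, I would observe that this exact pair $(U_1,U_2)$ is enumerated in one of the iterations of Algorithm \ref{algo:approx3}; in that iteration the algorithm invokes the Extended Goldberg's Algorithm on $G[V \setminus U_2]$ with constrained set $U_1$, producing some subgraph $G[Z^*]$ that is a densest subgraph of $G[V \setminus U_2]$ among those containing $U_1$.

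The key two observations then conclude the proof. First, because $Z' \subseteq V \setminus U_2$ and $Z' \supseteq U_1$, the set $Z'$ is itself a feasible candidate for the constrained densest subgraph problem solved by Extended Goldberg in that iteration, so $dens(G[Z^*]) \geq dens(G[Z'])$. Second, $G[Z^*]$ is a valid $\DENSEK$ solution: by construction $Z^* \supseteq U_1$ and $Z^* \cap U_2 = \emptyset$, and by the property guaranteed in Lemma \ref{lem:DenseK} no $W_j$ has both of these properties simultaneously, hence $Z^* \neq W_j$ for every $1 \leq j \leq t$. Thus the subgraph $G[Z]$ returned by Algorithm \ref{algo:approx3}, being the densest over all iterations, satisfies $dens(G[Z]) \geq dens(G[Z^*]) \geq dens(G[Z'])$, which is the claim.

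The main obstacle is the second observation above, namely ensuring that $G[Z^*]$ is distinct from every subgraph in $\mathcal{W}$. The Extended Goldberg's Algorithm only guarantees the constraints $Z^* \supseteq U_1$ and $Z^* \subseteq V \setminus U_2$; distinctness from each $W_j$ is not enforced directly by the flow computation but is instead inherited from the combinatorial condition on $(U_1,U_2)$ provided by Lemma \ref{lem:DenseK}. Care is therefore needed in stating Lemma \ref{lem:DenseK} so that the forbidden pattern ($W_j \supseteq U_1$ and $W_j \cap U_2 = \emptyset$) is exactly the pattern that any subgraph satisfying the Extended Goldberg constraints automatically exhibits, which is precisely the case here. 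Once this alignment is made explicit, the rest of the argument is essentially a one-line enumeration bound.
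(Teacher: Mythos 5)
Your proof is correct and follows essentially the same route as the paper's: apply Lemma \ref{lem:DenseK} to the optimal solution to obtain a pair $(U_1,U_2)$ that Algorithm \ref{algo:approx3} enumerates, and compare densities in the corresponding iteration. You are in fact slightly more explicit than the paper, since you also verify that the constrained densest subgraph computed in that iteration is itself distinct from every subgraph in $\mathcal{W}$, a point the paper's proof leaves implicit.
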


We recall that 
a densest subgraph constrained to a given 
set can be computed in time $O(|V||E| \log(\frac{|V|^2}{|E|}))$ with the
Extended Goldberg's Algorithm \cite{DBLP:journals/siamcomp/GalloGT89,Zou2013}.
It follows that Algorithm~\ref{algo:approx3} returns an optimal solution of $\DENSEK$
in time 
$O(|V|^{2k+2}|E| \log(\frac{|V|^2}{|E|}))$.


\subsubsection{A $\frac{2}{3}$-Approximation Algorithm when $k$ is a Constant\\}

We show that, by solving the $\DENSEK$ problem optimally, 
we achieve a $\frac{2}{3}$ approximation ratio for \TOPK{}. 
The approximation algorithm returns the solution 
of maximum value between the solution returned by 
Algorithm \ref{algo:approxDef} and a
solution consisting of $k$ singletons.

First, we consider the solution
returned by Algorithm \ref{algo:approxDef}.
At each step, Algorithm \ref{algo:approxDef} 
computes an optimal solution of
$\DENSEK$ in time 
$O(|V|^{2k+2}|E| \log(\frac{|V|^2}{|E|}))$
and the output subgraph is added to the solution.
Since $k$ is a constant, hence
the number of iterations of Algorithm \ref{algo:approxDef} is
a constant,
the overall time complexity of Algorithm \ref{algo:approxDef} is $O(|V|^{2k+2}|E| \log(\frac{|V|^2}{|E|}))$. 

\begin{algorithm}[H]
\KwData{a graph $G$ }
\KwResult{ a set $\mathcal{W} = \{ G[W_1], \dots , G[W_k] \}$ of subgraphs of $G$}
$\mathcal{W} \leftarrow \{ G[W_1] \}$ /* $G[W_1]$ is a densest subgraph of $G$ */\;
\For{$i \leftarrow 2$ \KwTo $k$}{
Compute an optimal solution $G[Z]$ of $\DENSEK$ 
with input $(G,\mathcal{W})$ /* Applying Algorithm \ref{algo:approx3} */\;
$\mathcal{W} \leftarrow \mathcal{W} \cup \{ G[Z] \}$
}
Return($\mathcal{W}$)\;
\caption{Algorithm that returns an approximated
solution of \TOPK{}}
\label{algo:approxDef}
\end{algorithm}


Consider the solution
$\mathcal{W}=  \{ G[W_1], \dots, G[W_k]  \}$ returned
by Algorithm \ref{algo:approxDef},
we prove a result on the distance between two
subgraphs of $\mathcal{W}$.

\begin{lemma}
\label{lem:ApproxDistance}
Let $\mathcal{W}=  \{ G[W_1], \dots, G[W_k]  \}$ be a set
of subgraphs returned by Algorithm~\ref{algo:approxDef}.
Then, for each $G[W_i],G[W_j] \in \mathcal{W}$, with $1 \leq i \leq k$, 
$1 \leq j \leq k$ and $i \neq j$, it holds
$d(G[W_i],G[W_j]) > 1$.
\end{lemma}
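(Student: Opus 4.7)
The plan is to reduce the statement to a simple combinatorial inequality and then exploit the distinctness guarantee provided by the \DENSEK{} step of Algorithm~\ref{algo:approxDef}.

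First, I would unfold the definition of $d$. Since $W_i \neq W_j$, we have
\[
d(G[W_i],G[W_j]) \;=\; 2 - \frac{|W_i \cap W_j|^2}{|W_i|\,|W_j|},
\]
so proving $d(G[W_i],G[W_j]) > 1$ is equivalent to proving the strict inequality $|W_i \cap W_j|^2 < |W_i|\,|W_j|$. This is the only non-trivial claim to establish.

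Next, I would invoke the algorithm's structure to guarantee that $W_i \neq W_j$. The first subgraph $G[W_1]$ is an arbitrary densest subgraph, and for $i \geq 2$ the set $W_i$ is produced by solving $\DENSEK$ on input $(G,\mathcal{W})$ where $\mathcal{W}$ already contains $W_1,\dots,W_{i-1}$. By the definition of $\DENSEK$, the returned subgraph must be distinct from every subgraph already in $\mathcal{W}$. Hence every two sets $W_i, W_j$ appearing in the final collection satisfy $W_i \neq W_j$.

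The main (and only) obstacle is the strictness of the inequality $|W_i \cap W_j|^2 < |W_i|\,|W_j|$. I would argue directly. Set $a = |W_i|$, $b = |W_j|$, $c = |W_i \cap W_j|$. Since $c \leq a$ and $c \leq b$, we always have $c^2 \leq ab$. Equality $c^2 = ab$ combined with $c \leq \min(a,b)$ forces $c = a = b$, which in turn forces $W_i \cap W_j = W_i = W_j$, contradicting $W_i \neq W_j$. Thus $c^2 < ab$ strictly, giving
\[
\frac{|W_i \cap W_j|^2}{|W_i|\,|W_j|} < 1, \qquad\text{hence}\qquad d(G[W_i],G[W_j]) > 1,
\]
which is exactly the claim of the lemma. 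No deeper property of densest subgraphs or of Goldberg's algorithm is needed: the lemma is a purely set-theoretic consequence of the distinctness enforced by $\DENSEK$.
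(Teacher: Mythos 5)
Your proof is correct and follows essentially the same route as the paper: unfold the definition of $d$ using the distinctness of the $W_i$ guaranteed by the \DENSEK{} step, and bound the ratio $\frac{|W_i \cap W_j|^2}{|W_i||W_j|}$. In fact your argument is slightly sharper, since you justify the \emph{strict} inequality $|W_i \cap W_j|^2 < |W_i||W_j|$ (equality would force $W_i = W_j$), matching the lemma's statement $d(G[W_i],G[W_j]) > 1$, whereas the paper's own proof only concludes $d(G[W_i],G[W_j]) \geq 1$, which is all that is needed later.
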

\begin{proof}
By the definition of distance $d$, 
since $G[W_i]$ and $G[W_j]$, with $1 \leq i \leq k$, 
$1 \leq j \leq k$ and $i \neq j$, 
are distinct subgraphs of $G$, it follows that 
$
d(G[W_i],G[W_j]) = 2- \frac{|W_i \cap W_j|^2}{|W_i||W_j|}
$.
Since $\frac{|W_i \cap W_j|^2}{|W_i||W_j|} \leq 1$, it follows
that 
$d(G[W_i],G[W_j]) \geq 1$.
\end{proof}

Now, we prove a bound on the value $r(\mathcal{W})$
of a solution $\mathcal{W}$
returned by Algorithm \ref{algo:approxDef}.

\begin{lemma}
\label{lem:ApproxRkconstant}
Let $\mathcal{W}=  \{ G[W_1], \dots, G[W_k]  \}$ be a set
of subgraphs returned by Algorithm~\ref{algo:approxDef}
and let $\mathcal{W}^o=  \{ G[W^o_1], \dots, G[W^o_k]  \}$ 
be an optimal solution of \TOPK{} over instance $G$.
Then, for each $\lambda > 0$,
\[
dens(\mathcal{W}) \geq dens(\mathcal{W}^o)
\text{  and  }
\lambda \sum_{i=1}^{k-1} \sum_{j=i+1}^k d(W_i,W_j)
\geq \frac{1}{2} \lambda \sum_{i=1}^{k-1} \sum_{j=i+1}^k d(W_i^o,W_j^o).
\]
\end{lemma}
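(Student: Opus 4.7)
The plan is to establish the two inequalities independently, each via a short argument.

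For the density bound $dens(\mathcal{W}) \geq dens(\mathcal{W}^o)$, I would first relabel the subgraphs of $\mathcal{W}^o$ so that $dens(G[W_1^o]) \geq dens(G[W_2^o]) \geq \dots \geq dens(G[W_k^o])$, and then prove the stronger pointwise claim $dens(G[W_i]) \geq dens(G[W_i^o])$ for every $i \in \{1,\dots,k\}$. For $i=1$ this is immediate, since $G[W_1]$ is a densest subgraph of $G$ and hence dominates any subgraph of $G$, in particular $G[W_1^o]$. For $i \geq 2$, I would use a pigeonhole step: the $i$ subgraphs $G[W_1^o],\dots,G[W_i^o]$ are pairwise distinct (because $\mathcal{W}^o$ is a set), while only $i-1$ subgraphs $G[W_1],\dots,G[W_{i-1}]$ have been committed by Algorithm \ref{algo:approxDef} so far, so there exists an index $j^\ast \leq i$ such that $W_{j^\ast}^o \notin \{W_1,\dots,W_{i-1}\}$. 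Thus $G[W_{j^\ast}^o]$ is a feasible solution of \DENSEK{} at iteration $i$, and Theorem \ref{teo:OptDenseK} yields $dens(G[W_i]) \geq dens(G[W_{j^\ast}^o]) \geq dens(G[W_i^o])$, where the last inequality uses the chosen ordering. Summing over $i$ gives the density bound.

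For the distance bound, I would invoke Lemma \ref{lem:ApproxDistance}, which ensures $d(G[W_i],G[W_j]) \geq 1$ for every pair of distinct subgraphs in $\mathcal{W}$, and the trivial upper bound $d(G[W_i^o],G[W_j^o]) \leq 2$ from the very definition of $d$. Since each pairwise term in the algorithm's distance sum is at least half of the maximum attainable pairwise distance, summing over the $\binom{k}{2}$ pairs gives
\[
\sum_{i=1}^{k-1}\sum_{j=i+1}^{k} d(G[W_i],G[W_j]) \;\geq\; \binom{k}{2} \;\geq\; \tfrac{1}{2}\sum_{i=1}^{k-1}\sum_{j=i+1}^{k} d(G[W_i^o],G[W_j^o]),
\]
and multiplication by $\lambda>0$ preserves the inequality.

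The genuinely delicate step is the pointwise density comparison: everything hinges on the pigeonhole observation that among the first $i$ densest optimal subgraphs there must be one which is still eligible for \DENSEK{} at iteration $i$. This requires treating $\mathcal{W}^o$ as a set of genuinely distinct subgraphs and applying Theorem \ref{teo:OptDenseK} with the partially built collection $\{W_1,\dots,W_{i-1}\}$ as the input forbidden set. The distance inequality, by contrast, is a one-line consequence of Lemma \ref{lem:ApproxDistance} and the definition of $d$.
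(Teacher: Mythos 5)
Your proof is correct and follows essentially the same route as the paper: the density bound rests on the optimality of the \DENSEK{} solution at each greedy iteration combined with a pigeonhole argument over the (distinct) optimal subgraphs, and the distance bound is exactly the paper's combination of Lemma \ref{lem:ApproxDistance} with the trivial bound $d \leq 2$. The only difference is cosmetic: you prove the pointwise inequality $dens(G[W_i]) \geq dens(G[W_i^o])$ directly for each $i$, whereas the paper packages the same pigeonhole step into an induction on $k$.
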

\begin{proof}
%
The second inequality follows from Lemma \ref{lem:ApproxDistance} and from the fact that
$ d(G[U],G[Z]) \leq 2$, for each $U,Z \subseteq V$.

We prove the first inequality of the lemma 
by induction on $k$.
Let $G[W_i]$, with $2 \leq i \leq k$, be the subgraph added to $\mathcal{W}$ by the $i$-th iteration of
Algorithm \ref{algo:approxDef}.
By construction, $dens(G[W_1])$ $\geq$ $dens(G[W_2]) \geq $ $\dots$ 
$\geq dens(G[W_k])$. Moreover,
assume w.l.o.g. that $dens(G[W^o_1]) \geq$ $dens(G[W^o_2]) \geq $ $\dots$ 
$ \geq dens(G[W^o_k])$.

When $k=1$, by construction of Algorithm \ref{algo:approxDef}, 
$G[W_1]$ is a densest subgraph of $G$,
it follows that $dens(G[W_1]) \geq dens(G[W^o_1])$.
Assume that the lemma holds for $ k -1$, we prove
that it holds for $k$. First, notice that
$
\sum_{i=1}^k dens(G[W_i]) = \sum_{i=1}^{k-1} dens(G[W_i]) +
dens(G[W_k])
$.
By induction hypothesis 
\[
\sum_{i=1}^{k-1} dens(G[W_i]) \geq \sum_{i=1}^{k-1} dens(G[W^o_i]).
\]

Notice that $G[W_k]$ is an optimal solution of $\DENSEK$ on instance
$(G, $ $\{ G[W_1]$, $G[W_2],$ $\dots, G[W_{k-1}] \})$.
By the pigeon principle at least one of 
$G[W^o_1]$, $G[W^o_2]$, $\dots$, $G[W^o_k]$
does not belong to the set $\{ G[W_1], G[W_2],$ $\dots,$ $G[W_{k-1}]\}$ 
of subgraphs, hence, by the optimality of $G[W_k]$,
$dens(G[W_k]) \geq dens(G[W^o_p])$, for some $p$ with $1 \leq p \leq k$, and $dens(G[W^o_p]) \geq dens(G[W^o_k])$.
Now,
\[
\sum_{i=1}^k dens(G[W_i]) = \sum_{i=1}^{k-1} dens(G[W_i]) +
dens(G[W_k])  \geq 
\]
\[
\sum_{i=1}^{k-1} dens(G[W^o_i]) +
dens(G[W^o_k]) \geq \sum_{i=1}^{k} dens(G[W^o_i])
\]
thus concluding the proof.
\end{proof}

Consider Algorithm $A_T$ that, given an instance $G$ of 
\TOPK{},
returns a solution $\mathcal{W}'=\{ G[W'_1], \dots, G[W'_k] \}$ 
consisting of $k$ distinct singletons. 
Notice that, 
since each $G[W'_i]$, with $1 \leq i \leq k$,
is a singleton, it follows that $dens(\mathcal{W}')=0$.
Moreover, since the subgraphs in $\mathcal{W}'$ are pairwise disjoint, we have
$d(G[W'_i],G[W'_j])=2$, 
for each $G[W'_i]$, $G[W'_j] \in \mathcal{W}'$ with $1 \leq i,j \leq k$ 
and $i \neq j$.

We can prove now that the maximum between
$r(\mathcal{W})$ (where $\mathcal{W}$ is the 
solution returned by
Algorithm \ref{algo:approxDef})
and $r(\mathcal{W}')$
(where $\mathcal{W}'$ is the solution returned by Algorithm $A_T$) is at least $\frac{2}{3}$ of the value of an optimal solution of \TOPK{}.

\begin{theorem}
\label{teo:finalApprox}
Let $\mathcal{W}=  \{ G[W_1], \dots, G[W_k]  \}$ 
be the solution 
returned by Algorithm \ref{algo:approxDef} and 
let $\mathcal{W}'=  \{ G[W'_1], \dots, G[W'_k]  \}$
be the solution returned by Algorithm $A_T$.
Let
$\mathcal{W}^o=  \{ G[W^o_1], \dots,$ $G[W^o_k]  \}$ 
be an optimal solution of \TOPK{} over instance $G$. Then
$ 
\max(r(\mathcal{W}), r(\mathcal{W}')) \geq 
\frac{2}{3}~r(\mathcal{W}^o).
$
\end{theorem}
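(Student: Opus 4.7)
The plan is to reduce everything to two numerical bounds and then combine them via an explicit convex combination. Write $A = dens(\mathcal{W}^o)$ and $B = \lambda \sum_{1 \leq i < j \leq k} d(G[W_i^o], G[W_j^o])$, so that $r(\mathcal{W}^o) = A + B$. I will separately lower-bound $r(\mathcal{W})$ and $r(\mathcal{W}')$ in terms of $A$ and $B$, and then show that a suitable convex combination of the two bounds already reaches $\frac{2}{3}(A+B)$, which automatically forces the maximum to do so as well.

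For $r(\mathcal{W})$, I would just invoke Lemma \ref{lem:ApproxRkconstant} directly: it gives $dens(\mathcal{W}) \geq A$ and $\lambda \sum d(G[W_i],G[W_j]) \geq \tfrac{1}{2} B$, and summing these two inequalities yields $r(\mathcal{W}) \geq A + B/2$. For $r(\mathcal{W}')$, I would use that every $G[W'_i]$ is a singleton, so $dens(\mathcal{W}') = 0$, and that the $k$ singletons are pairwise distinct and disjoint, so $d(G[W'_i],G[W'_j]) = 2$ for every $i \neq j$; thus $r(\mathcal{W}') = 2\lambda\binom{k}{2}$. Finally, since $d(G[U],G[Z]) \leq 2$ for all pairs, $B \leq 2\lambda\binom{k}{2}$, giving $r(\mathcal{W}') \geq B$.

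With both bounds in hand, the finishing move is a weighted average:
\[
\tfrac{2}{3}\, r(\mathcal{W}) + \tfrac{1}{3}\, r(\mathcal{W}') \;\geq\; \tfrac{2}{3}\bigl(A + \tfrac{B}{2}\bigr) + \tfrac{1}{3} B \;=\; \tfrac{2}{3} A + \tfrac{2}{3} B \;=\; \tfrac{2}{3}\, r(\mathcal{W}^o).
\]
Since $\max(r(\mathcal{W}), r(\mathcal{W}'))$ is at least any convex combination of its two arguments, the desired inequality $\max(r(\mathcal{W}), r(\mathcal{W}')) \geq \tfrac{2}{3}\, r(\mathcal{W}^o)$ follows immediately.

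There is no real obstacle in the argument: the conceptual content is simply that Algorithm \ref{algo:approxDef} recovers the full density term but loses a factor $\tfrac12$ on the distance term, while Algorithm $A_T$ achieves the maximum possible distance contribution $2\lambda\binom{k}{2}$ but contributes nothing to density, so combining them with weights $\tfrac{2}{3}$ and $\tfrac{1}{3}$ exactly balances the deficit. The only point to be careful about is verifying that $2\lambda\binom{k}{2} \geq B$, which is immediate from $d \leq 2$, and that the singleton solution is admissible (which requires $k \leq |V|$, as guaranteed by the problem statement).
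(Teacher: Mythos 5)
Your proof is correct and rests on exactly the same ingredients as the paper's: the two bounds of Lemma \ref{lem:ApproxRkconstant} (giving $r(\mathcal{W}) \geq dens(\mathcal{W}^o) + \frac{1}{2}\lambda\sum_{i<j} d(G[W_i^o],G[W_j^o])$) and the observation that the singleton solution attains the maximum possible distance term, so $r(\mathcal{W}') = 2\lambda\binom{k}{2} \geq \lambda\sum_{i<j} d(G[W_i^o],G[W_j^o])$. The only difference is in the finishing step: you take the convex combination $\frac{2}{3}r(\mathcal{W}) + \frac{1}{3}r(\mathcal{W}')$, whereas the paper argues by cases according to whether $\lambda\sum_{i<j} d(G[W_i^o],G[W_j^o]) \geq 2\,dens(\mathcal{W}^o)$ — your version is a slightly cleaner but mathematically equivalent way to conclude.
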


\subsection{Approximation When $k$ is not a Constant}
\label{subsec:approxNotConstant}

Now, we show that \TOPK{} can be approximated
within factor 
$\frac{1}{2}$ when $k$ is not
a constant.
The approximation algorithm (Algorithm \ref{algo:approxkNotConstant}), consists 
of two phases.
In the first phase,
while $\mathcal{W}$ 
does not contain crossing subgraphs
(see Property~\ref{prop:Stop}),
Algorithm~\ref{algo:approxkNotConstant} 
adds to $\mathcal{W}$ a subgraph which is an
optimal solution of $\DENSEK$.
When Property \ref{prop:Stop} holds, Phase 2 of
Algorithm \ref{algo:approxkNotConstant} completes $\mathcal{W}$,
so that $\mathcal{W}$ contains $k$ subgraphs.

\begin{algorithm}
\KwData{a graph $G$ }
\KwResult{$\mathcal{W} = \{ G[W_1], \dots , G[W_k] \}$ of subgraphs of $G$}
$\mathcal{W} \leftarrow \{ G[W_1] \}$ /* $G[W_1]$ is a densest subgraph of $G$ */\;

\textbf{Phase 1}\;
\While{$|\mathcal{W}| < k$ and Property \ref{prop:Stop} does not hold}
{
Compute an optimal solution $G[Z]$ of $\DENSEK$ 
with input $(G,\mathcal{W})$  /* Applying 
Algorithm \ref{algo:approx4} (described later) */\;
$\mathcal{W} \leftarrow \mathcal{W} \cup \{ G[Z] \}$\;
}
\textbf{Phase 2} (Only if $|\mathcal{W}|<k$)\;
$W_{i,j} \leftarrow W_i \cap W_j$, with $W_i$ and $W_j$
two crossing subgraphs in $\mathcal{W}$\;
\If{$|W_{i,j}| \leq 3$}
{
Complete $\mathcal{W}$ by adding the
densest distinct subgraphs (not already in $\mathcal{W}$) induced by 
$W_i \cup \{ v\}$, with $v \in V \setminus W_i$,
and by $W_j \cup \{ u \}$, with $u \in V \setminus W_j$; 
}
\If{$|W_{i,j}| \geq 4$}
{
Complete $\mathcal{W}$ by adding the
densest distinct subgraphs (not already in $\mathcal{W}$) induced by 
$W_i \cup \{ v\}$, with $v \in V \setminus W_i$,
by $W_j \cup \{ u \}$, with $u \in V \setminus W_j$, and by
$W_j \setminus \{ w\}$, with $w \in W_{i,j}$;
}
Return($\mathcal{W}$)\;
%
\caption{Returns an approximated
solution of \TOPK{}}
\label{algo:approxkNotConstant}
\end{algorithm}


First, we define formally the property on which
Algorithm \ref{algo:approxkNotConstant} is based.

\begin{property}
\label{prop:Stop} 
Given a collection $\mathcal{W}$ of $t$ subgraphs, with $2 \leq t \leq k-1$,
there exist two crossing subgraphs $G[W_i]$ and $G[W_j]$ 
in $\mathcal{W}$, with $1 \leq i \leq t$, $1 \leq j \leq t$
and $i \neq j$. 
\end{property}

\subsubsection{Analysis of Phase 1} 

We show that, 
while $\mathcal{W}$ 
does not satisfy Property \ref{prop:Stop},
$\DENSEK$ can be solved optimally in polynomial time.
First, 
we prove a property of a solution of the $\DENSEK$ problem
when Property \ref{prop:Stop} does not hold.

\begin{lemma}
\label{lem:PrelAprrox}
Consider a graph $G=(V,E)$ and a set $\mathcal{W} = \{ G[W_1], \dots , 
G[W_t]\}$, $1 \leq t \leq k-1$,
of subgraphs of $G$ that
does not satisfy Property \ref{prop:Stop}. 
Given a subgraph $G[Z]$ distinct from the subgraphs in
$\mathcal{W}$, there exist at most three vertices $u_1, u_2, u_3 \in V$
such that $ u_1, u_2 \in Z $, $u_3 \notin Z$
and there is no $G[W_j]$ in $\mathcal{W}$, 
$1 \leq j \leq t$, with $u_1, u_2 \in W_j$ and $u_3 \notin W_j$.
\end{lemma}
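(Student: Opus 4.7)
The strategy is to exploit the laminarity of $\mathcal{W}$, which follows from the negation of Property~\ref{prop:Stop}. I would classify each $W_j \in \mathcal{W}$ by its relation to $Z$ into four exhaustive (because $G[Z]$ is distinct from every $G[W_j]$) cases: proper superset of $Z$, proper subset of $Z$, crossing with $Z$, or disjoint from $Z$. The aim is to pick $u_1, u_2 \in Z$ and $u_3 \in V\setminus Z$ so that, for every $W_j$, at least one of $u_1 \notin W_j$, $u_2 \notin W_j$, or $u_3 \in W_j$ holds. The disjoint case is automatic, since $u_1 \in Z$ and $W_j \cap Z = \emptyset$ force $u_1 \notin W_j$.

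The supersets will be handled by the choice of $u_3$. Any two subgraphs $W_j \supsetneq Z$ share all the vertices of $Z$, so laminarity forces them to form a chain. If this chain is nonempty, let $W^*$ be its minimum and set $u_3 \in W^* \setminus Z$; then every superset contains $W^*$, hence also $u_3$. Otherwise, pick any $u_3 \in V \setminus Z$ (or skip $u_3$ entirely, since the lemma only requires at most three vertices).

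What remains is to cover the $W_j$ in the proper-subset and crossing categories that do not contain $u_3$; call this subfamily $\mathcal{W}'$. The plan is to view each such $W_j$ through its trace $W_j \cap Z$ on $Z$. Laminarity is preserved by intersecting with a fixed set, so $\mathcal{W}^{\star} := \{W_j \cap Z : W_j \in \mathcal{W}'\}$ is a laminar family of proper subsets of $Z$ (proper subsets stay proper; for a crossing $W_j$ we have $Z \setminus W_j \neq \emptyset$). Its inclusion-maximal elements are pairwise disjoint. If there are at least two such maximals $M_1, M_2$, I would take $u_1 \in M_1$ and $u_2 \in M_2$; every member of $\mathcal{W}^{\star}$ lies inside some maximal and therefore contains at most one of $u_1, u_2$. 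If there is a unique maximum $M$, I would take $u_1 \in Z \setminus M$ and any $u_2 \in Z$, so that no set in $\mathcal{W}^{\star}$ contains $u_1$. If $\mathcal{W}^{\star}$ is empty, any $u_1, u_2 \in Z$ work.

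The main obstacle is the crossing category, which could \emph{a priori} be large, so it is not obvious that only a constant number of vertices suffice. The resolution is that restricting to $Z$ confines the problematic subgraphs to a laminar family of \emph{proper} subsets of $Z$, whose maximal members form a disjoint packing that can be "shattered" by just two well-chosen vertices, while the superset chain is handled uniformly by a single vertex outside $Z$.
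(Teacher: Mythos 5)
Your proof is correct, but it follows a different decomposition than the paper's. The paper anchors the argument at an arbitrary vertex $u_1 \in Z$: since Property~\ref{prop:Stop} fails, the members of $\mathcal{W}$ containing $u_1$ form a single chain under inclusion; $u_2$ is then chosen in $Z$ outside the largest chain element contained in $Z$, and $u_3$ is chosen outside $Z$ inside the smallest chain element not contained in $Z$ --- so this one vertex $u_3$ simultaneously covers both proper supersets of $Z$ and sets crossing $Z$ that pass through $u_1$, while every member avoiding $u_1$ is handled for free. You instead anchor at $Z$ itself: the proper supersets of $Z$ form a chain whose minimum supplies $u_3$, and the remaining subsets and crossers are projected to their traces on $Z$, a laminar family of proper subsets of $Z$ whose distinct maximal elements are pairwise disjoint, so one or two well-chosen vertices of $Z$ separate them. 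Both arguments rest on the same consequence of the hypothesis (no crossing pair in $\mathcal{W}$ means the family is laminar); the paper's arbitrary choice of $u_1$ buys a single totally ordered subfamily and essentially no case analysis, whereas your trace argument needs the case split on the number of maximal traces but avoids any arbitrary initial choice and makes the structure of $\mathcal{W}$ relative to $Z$ explicit. Your treatment of the degenerate situations (no proper superset, $Z=V$, empty trace family) via the ``at most three vertices'' clause is consistent with the paper's implicit handling, and, like the paper, you implicitly assume the vertex sets involved are nonempty.
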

\begin{proof}
Consider a subgraph $G[Z]$ distinct from the subgraphs in $\mathcal{W}$ and a 
vertex $u_1 \in Z$.
Notice that, for each subgraph in $\mathcal{W}$ that does not contain $u_1$,
the lemma holds.
Now, we consider the set $\mathcal{W}'$ of subgraphs 
in $\mathcal{W}$ that contain $u_1$. 
If $\mathcal{W}' = \emptyset$, then the lemma holds, since 
there is no subgraph in $\mathcal{W}$ that contains $u_1$.

Consider the pair $(\mathcal{W}', \subseteq)$
where $\subseteq$ is the subgraph inclusion relation \footnote{Given $A,B \subseteq V$, $G[A] \subseteq G[B]$ if and only if $A \subseteq B$}.
$(\mathcal{W}', \subseteq)$ is a well-ordered set
\footnote{We recall that a well-ordered set is a pair ($S, \leq$),
where $S$ is a set and $\leq$ is a binary relation on $S$ 
such that (1) Relation $\leq$ satisfies the following properties: reflexivety, antisymmetry, transitivity and comparability; 
(2) every non-empty subset of $S$ has a least element based on
relation $\leq$.}.
Clearly, $\subseteq$ is reflexive, antysimmetric and
transitive on $\mathcal{W}'$. We show that
is comparable, that is, given $G[W_x], G[W_y] \in \mathcal{W'}$ with $W_x \neq W_y$, either
$W_x \subset W_y$ or $W_y \subset W_x$. 
Indeed, consider two subgraphs $G[W_x], G[W_y] \in \mathcal{W}'$, such that
neither $W_x \subset W_y$ nor $W_y \subset W_x$. It follows
that they are crossing subgraphs, since they both contain $u_1$, contradicting the hypothesis
that Property \ref{prop:Stop} does not hold. Since
$\mathcal{W}'$ is a finite set, it follows that
$(\mathcal{W}', \subseteq)$ is a well-ordered set.

Consider now the set $\mathcal{W}'_C$ of subgraphs in $\mathcal{W}'$ that are subgraphs of $G[Z]$ and 
notice that $(\mathcal{W}', \subseteq)$
and $(\mathcal{W}'_C, \subseteq)$
are well-ordered sets.
Let $G[W_v]$ be the largest subgraph in $\mathcal{W}'_C$. 
Since $G[W_v]$ is a subgraph of $G[Z]$, 
there exists a vertex $u_2 \in Z \setminus W_v$.
Since $(\mathcal{W}'_C, \subseteq)$
is a well-ordered set,
each subgraph in $\mathcal{W}'_C \setminus \{ G[W_v] \}$ 
is a subgraph of $G[W_v]$,
thus each subgraph in $\mathcal{W}'_C$ 
does not contain $u_2$.

Consider now the set $\mathcal{W}'_N$ of subgraphs in $\mathcal{W}'$ which are not subgraphs of $G[Z]$. Notice that 
$(\mathcal{W}'_N, \subseteq)$
is a well-ordered set and 
let $G[W_y]$ be the graph of minimum cardinality 
in $\mathcal{W}'_N$.
It follows that there exists a vertex $u_3 \in W_y \setminus Z$, and notice
that, since $(\mathcal{W}'_N, \subseteq)$
is a well-ordered set,
$u_3$ belongs to each subgraph in $\mathcal{W}'_N$.

Since we have shown that there exists a vertex $u_2 \in Z$ that does not belong to 
any subgraph of $\mathcal{W}'_C$ and there exists
a vertex $u_3 \notin Z$ that belongs to each subgraph
of $\mathcal{W}'_N$, the lemma follows.
\end{proof}


Algorithm \ref{algo:approx4} computes an optimal solution $G[Z]$ 
of $\DENSEK$
when Property 1 does not hold.
Algorithm \ref{algo:approx4} is a modified variant of 
Algorithm \ref{algo:approx3} (see Section \ref{subsec:ApproxConstant}), which
considers a set of at most three vertices $u_1,u_2, u_3$,
with $U_1 = \{ u_1,u_2\}$ and $U_2=\{ u_3\}$.
Based on Lemma \ref{lem:PrelAprrox}, we can prove 
the following result.

\begin{theorem}
\label{teo:OptPhase1}
Let $G[Z]$ be the solution returned by Algorithm \ref{algo:approx4}.
Then, an optimal solution of $\DENSEK$ over instance 
$(G,\mathcal{W})$ when
Property \ref{prop:Stop} does not hold has density
at most $dens(G[Z])$.
\end{theorem}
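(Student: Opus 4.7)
The plan is to mirror the argument behind Theorem~\ref{teo:OptDenseK}, but using the sharper structural bound from Lemma~\ref{lem:PrelAprrox} in place of Lemma~\ref{lem:DenseK}. Let $G[Z']$ be an optimal solution of $\DENSEK$ on instance $(G,\mathcal{W})$, so $Z' \neq W_j$ for every $1 \leq j \leq t$. Since Property~\ref{prop:Stop} does not hold, Lemma~\ref{lem:PrelAprrox} guarantees the existence of (at most) three vertices $u_1,u_2,u_3 \in V$ with $u_1,u_2 \in Z'$, $u_3 \notin Z'$, and such that no subgraph $G[W_j] \in \mathcal{W}$ simultaneously satisfies $u_1,u_2 \in W_j$ and $u_3 \notin W_j$.

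Next I would appeal to the enumeration performed by Algorithm~\ref{algo:approx4}. By construction, the algorithm iterates over all choices of at most three vertices $u_1,u_2,u_3 \in V$, sets $U_1 = \{u_1,u_2\}$ and $U_2 = \{u_3\}$, and for each choice invokes the Extended Goldberg's Algorithm on $G[V \setminus U_2]$ with constrained set $C(U_1)$. Hence, at the iteration corresponding to the triple furnished by Lemma~\ref{lem:PrelAprrox}, it computes a subgraph $G[Z]$ of $G[V \setminus \{u_3\}]$ that contains $\{u_1,u_2\}$ and whose density is maximum among all such subgraphs.

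I would then check that this $G[Z]$ is a feasible solution of $\DENSEK$, i.e.\ that $Z \neq W_j$ for every $j$: by the property on $u_1,u_2,u_3$, no $W_j$ satisfies $u_1,u_2 \in W_j$ and $u_3 \notin W_j$, while $Z$ does, so $Z$ differs from every $W_j$. Feasibility of $Z'$ for the same constrained densest-subgraph instance (it contains $\{u_1,u_2\}$ and avoids $u_3$) together with the optimality of the Extended Goldberg's subroutine yields $dens(G[Z]) \geq dens(G[Z'])$. Since Algorithm~\ref{algo:approx4} returns the densest among all subgraphs computed over its enumeration, the returned subgraph has density at least $dens(G[Z'])$, proving the theorem.

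The main subtlety, rather than a serious obstacle, is making sure the enumeration really covers the degenerate cases of Lemma~\ref{lem:PrelAprrox} where fewer than three vertices are needed (for instance, when there is already some $u_1 \in Z' \setminus \bigcup_j W_j$); this is handled by letting the algorithm range over \emph{all} pairs $(U_1,U_2)$ with $|U_1|+|U_2| \leq 3$ and $U_1 \cap U_2 = \emptyset$, including the empty sets, so that for every configuration allowed by Lemma~\ref{lem:PrelAprrox} the corresponding constrained densest-subgraph computation is actually carried out.
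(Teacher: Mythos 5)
Your proposal is correct and follows essentially the same route as the paper's proof: apply Lemma~\ref{lem:PrelAprrox} to an optimal solution of $\DENSEK$, then observe that Algorithm~\ref{algo:approx4}'s enumeration includes the corresponding triple, so the constrained densest subgraph computed there (and hence the returned $G[Z]$) has density at least that of the optimum. Your extra remarks on feasibility of the returned subgraph and on the degenerate cases with fewer than three vertices are sensible elaborations of the same argument, not a different approach.
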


Notice that Algorithm \ref{algo:approx4} returns an optimal solution of $\DENSEK$ in Case 3 
of Lemma \ref{lem:PrelAprrox} in time 
$O(|V|^{4}|E| \log(\frac{|V|^2}{|E|}))$, since it applies the Extended Goldberg's Algorithm 
of complexity $O(|V||E| \log(\frac{|V|^2}{|E|}))$ \cite{Zou2013},
for each set of three vertices in $V$.

\paragraph{\textbf{Analysis of Phase 2}\\ }

Assuming that Property \ref{prop:Stop} holds and 
$|\mathcal{W}|=t < k$, 
we consider Phase 2 of 
Algorithm \ref{algo:approxkNotConstant}.
Given  two crossing subgraphs $G[W_i]$
and $G[W_j]$ of $\mathcal{W}$, with $1 \leq i \leq t$, $1 \leq j \leq t$ and $i \neq j$,
define $W_{i,j} = W_i \cap W_j$,
Algorithm \ref{algo:approxkNotConstant} adds $h = k-t$ subgraphs to $\mathcal{W}$ 
until $|\mathcal{W}|=k$,
as follows.

If $|W_{i,j}| \leq 3$, then Phase 2 of 
Algorithm \ref{algo:approxkNotConstant}
adds the $h$ densest 
distinct subgraphs (not already in $\mathcal{W}$) induced by 
$W_i \cup \{ v\}$, for some $v \in V \setminus W_i$,
and by $W_j \cup \{ u \}$, 
for some $u \in V \setminus W_j$.

If $|W_{i,j}| \geq 4$, then Phase 2 of 
Algorithm \ref{algo:approxkNotConstant} adds
the $h$ densest distinct subgraphs (not already in $\mathcal{W}$)
induced by $W_i \cup \{ v\}$, for some $v \in V \setminus W_i$,
by $W_j \cup \{ u \}$, for some $u \in V \setminus W_j$, 
and by $W_j \setminus \{ w\}$, for some $w \in W_{i,j}$.

Next, we show that, after Phase 2 of 
Algorithm \ref{algo:approxkNotConstant},
$|\mathcal{W}|=k$ and each
subgraph added by Phase 2 has density at least
$\frac{1}{2} dens(G[W_j])$, where $G[W_j]$ is a subgraph added 
to $\mathcal{W}$ in Phase~1.

\begin{lemma}
\label{lem:secondPartCorrect}
$|\mathcal{W}|=k$ after the execution of Phase 2 of 
Algorithm \ref{algo:approxkNotConstant}.
\end{lemma}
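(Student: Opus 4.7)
The plan is to verify that in each of the two subcases of Phase~2, the candidate pool from which $\mathcal{W}$ is completed contains at least $h = k - t$ distinct subgraphs not already in $\mathcal{W}$, where $t = |\mathcal{W}|$ at the start of Phase~2. By the exit condition of Phase~1 and the fact that Property~\ref{prop:Stop} holds, we have $2 \leq t < k$ and two crossing subgraphs $G[W_i]$, $G[W_j] \in \mathcal{W}$; in particular $V \setminus W_i$ and $V \setminus W_j$ are both non-empty, and neither $W_i \subseteq W_j$ nor $W_j \subseteq W_i$.

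In Case~1 ($|W_{i,j}| \leq 3$), the families $\{W_i \cup \{v\} : v \in V \setminus W_i\}$ and $\{W_j \cup \{u\} : u \in V \setminus W_j\}$ contain $|V| - |W_i|$ and $|V| - |W_j|$ internally distinct subgraphs, respectively. I would first prove that an inter-family collision $W_i \cup \{v\} = W_j \cup \{u\}$ forces $|W_i| = |W_j| = n$ and $|W_{i,j}| = n-1$, so there is at most one such collision and it requires $n \leq 4$. Combining $|W_i| + |W_j| = |W_i \cup W_j| + |W_{i,j}| \leq |V| + 3$ in the no-collision subcase with $|W_i| + |W_j| \leq 8$ in the collision subcase, and invoking $|V| \geq 6$, this yields at least $|V| - 3$ distinct candidates. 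Neither $W_i$ nor $W_j$ is itself a candidate (by cardinality, and because crossing rules out $W_j = W_i \cup \{v\}$), so at most $t-2$ candidates can already lie in $\mathcal{W}$, leaving at least $|V| - 3 - (t-2) = |V| - t - 1 \geq k - t = h$ admissible candidates by $k < |V|$.

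In Case~2 ($|W_{i,j}| \geq 4$), the additional family $\{W_j \setminus \{w\} : w \in W_{i,j}\}$ provides at least $4$ pairwise distinct subgraphs. The crucial step is to rule out cross-family collisions: $W_i \cup \{v\} = W_j \setminus \{w\}$ would force $W_i \subseteq W_j$, contradicting crossing, and $W_j \cup \{u\} = W_j \setminus \{w\}$ is impossible by cardinality. Hence only the single collision type from Case~1 is possible. Using $2|V| - |W_i \cup W_j| \geq |V|$, the distinct candidate count is at least $|V| - 1$, and after discarding at most $t-2$ candidates that may already lie in $\mathcal{W}$, at least $|V| - t + 1 \geq h$ remain.

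I expect the tight counting in Case~1 to be the main obstacle: the no-collision subcase saturates the bound $|W_i| + |W_j| \leq |V| + 3$, while the collision subcase relies precisely on $|V| > 5$ to absorb the lost candidate. Case~2 is more forgiving because the third family together with the crossing hypothesis eliminates all but one type of collision.
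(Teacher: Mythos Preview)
Your proposal is correct and follows essentially the same route as the paper's proof: both split on $|W_{i,j}|\le 3$ versus $|W_{i,j}|\ge 4$, count the distinct candidates in the relevant families, show there are at least $|V|-3$ (resp.\ $|V|-1$) of them, observe that $W_i,W_j$ themselves are never candidates so at most $t-2$ members of $\mathcal{W}$ can already appear among them, and conclude via $k<|V|$. Your collision analysis (identifying that $W_i\cup\{v\}=W_j\cup\{u\}$ forces $|W_i\setminus W_j|=|W_j\setminus W_i|=1$, and that the third family cannot collide with the first two because of the crossing hypothesis) is more explicit than the paper's, which handles the same subcase split but leaves some of these verifications implicit; the final arithmetic and the use of $|V|>5$ match.
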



\begin{lemma}
\label{lem:secondPart}
Let $G[W']$ be a subgraph added to $\mathcal{W}$ by Phase 2 of 
Algorithm \ref{algo:approxkNotConstant}. Then, 
$dens(G[W']) \geq \frac{1}{2} dens(G[W_j])$, with 
$G[W_j]$ a subgraph added to $\mathcal{W}$ by Phase 1 of 
Algorithm~\ref{algo:approxkNotConstant}.
\end{lemma}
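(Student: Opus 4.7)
The plan is to verify the $\tfrac{1}{2}$-bound separately for each of the three possible forms that a subgraph $W'$ added by Phase~2 can take: $W_i\cup\{v\}$, $W_j\cup\{u\}$, and (when $|W_{i,j}|\ge 4$) $W_j\setminus\{w\}$. For each form I exhibit at least one candidate whose density meets the bound; since Algorithm~\ref{algo:approxkNotConstant} picks the densest candidate from the pool (excluding those already in $\mathcal{W}$), the selected $W'$ inherits the bound.

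First I would handle the two ``add a vertex'' types. Since $G[W_i]$ and $G[W_j]$ are crossing, the definition forces $|W_i|\ge 2$ and $|W_j|\ge 2$. A candidate $W'=W_i\cup\{v\}$ has $|W_i|+1$ vertices and at least $|E(W_i)|$ edges, so
\[
dens(G[W'])\ \ge\ \frac{|E(W_i)|}{|W_i|+1}\ =\ \frac{|W_i|}{|W_i|+1}\,dens(G[W_i])\ \ge\ \tfrac{2}{3}\,dens(G[W_i])\ \ge\ \tfrac{1}{2}\,dens(G[W_i]).
\]
Every candidate of this form therefore satisfies the bound with respect to the Phase-1 subgraph $G[W_i]$; the case $W'=W_j\cup\{u\}$ is identical after swapping the roles of $W_i$ and $W_j$.

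The main obstacle is the third form, $W' = W_j \setminus \{w\}$ with $w \in W_{i,j}$, because the bound is not satisfied for an arbitrary choice of $w$. I would exploit the freedom in choosing $w$ via an averaging argument over $W_{i,j}$. Writing $deg(w)$ for the degree of $w$ in $G[W_j]$,
\[
\sum_{w\in W_{i,j}} deg(w)\ \le\ \sum_{v\in W_j} deg(v)\ =\ 2|E(W_j)|,
\]
so there is some $w^{\star}\in W_{i,j}$ with $deg(w^{\star})\le 2|E(W_j)|/|W_{i,j}|\le |E(W_j)|/2$, where the last inequality uses $|W_{i,j}|\ge 4$. Consequently
\[
dens(G[W_j\setminus\{w^{\star}\}])\ =\ \frac{|E(W_j)|-deg(w^{\star})}{|W_j|-1}\ \ge\ \frac{|E(W_j)|/2}{|W_j|-1}\ \ge\ \frac{|E(W_j)|}{2|W_j|}\ =\ \tfrac{1}{2}\,dens(G[W_j]),
\]
where the second step uses $|W_j|\ge|W_{i,j}|\ge 4$. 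Thus the candidate pool always contains at least one subgraph of each form meeting the $\tfrac{1}{2}$-bound with respect to some Phase-1 subgraph, and since Algorithm~\ref{algo:approxkNotConstant} selects the densest candidates not already in $\mathcal{W}$, each added $W'$ inherits the bound. The only subtle point is ruling out the possibility that the algorithm is forced into a weaker candidate; this is clear because the type-1 and type-2 candidates meeting the bound ($|V|-|W_i|$ and $|V|-|W_j|$ of them, respectively) vastly outnumber $|\mathcal{W}|<k<|V|$, so enough high-density candidates remain available.
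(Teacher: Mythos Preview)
Your treatment of the addition candidates $W_i\cup\{v\}$ and $W_j\cup\{u\}$ matches the paper's, and your degree-averaging argument for the deletion case is a clean way to exhibit one good $w^{\star}$. The gap is in your final sentence. You claim that the type-1 and type-2 candidates ``vastly outnumber'' $k$, but $|V|-|W_i|$ and $|V|-|W_j|$ can each equal $1$: take $|W_i|=|W_j|=|V|-1$ with $|W_{i,j}|=|V|-2\ge 4$, which is perfectly compatible with $W_i,W_j$ being crossing. Then the only addition candidate is $V$ itself, while Phase~2 may still need to supply up to $k-2\le |V|-3$ subgraphs; the algorithm is forced to choose many deletion candidates $W_j\setminus\{w\}$, and your single good $w^{\star}$ does not control the rest of them.

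The paper takes a different route for the deletion case. Rather than exhibiting one good $w$, it bounds the \emph{sum} over all of them, proving
\[
\sum_{w\in W_{i,j}}dens(G[W_j\setminus\{w\}])\ \ge\ (|W_{i,j}|-2)\,dens(G[W_j])\ \ge\ \tfrac12\,|W_{i,j}|\,dens(G[W_j])
\]
by counting, for each edge of $E(W_j)$, how many of the $|W_{i,j}|$ deletions it survives. This average bound over the whole pool of deletion candidates, combined with the fact that the algorithm always selects the densest remaining candidates, is what the paper invokes to cover every subgraph actually added in Phase~2. To repair your argument you would need either this full sum bound or a count showing that sufficiently many $w\in W_{i,j}$ have $deg_{W_j}(w)\le |E(W_j)|/2$; the existence of one such $w^{\star}$ is not enough.
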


Phase 2 of Algorithm \ref{algo:approxkNotConstant} requires time $O(k^2 |V|)$, 
since we have to compare each subgraph to be added to $\mathcal{W}$
with the subgraphs already in $\mathcal{W}$ and each of this comparison requires
time $O(k|V|)$. Each iteration of Phase 1 
of Algorithm~\ref{algo:approxkNotConstant} requires time 
$O(|V|^{4}|E| \log(\frac{|V|^2}{|E|}))$, hence
the overall complexity of 
Algorithm~\ref{algo:approxkNotConstant} is 
$O(|V|^{5}|E| \log(\frac{|V|^2}{|E|}))$, since Phase 1 is iterated
at most $k \leq |V|-1$ times.


Now, thanks to Lemma \ref{lem:secondPart}, we are able to prove that the density of the solution returned by Algorithm \ref{algo:approxkNotConstant} 
is  at least half the density of an optimal solution
of \TOPK{}.

\begin{lemma}
\label{lem:GeneralApproxLemma}
Let $\mathcal{W}=  \{ G[W_1], \dots, G[W_k]  \}$ be the solution
returned by Algorithm \ref{algo:approxkNotConstant} 
and let
$\mathcal{W}^o=  \{ G[W^o_1], \dots, G[W^o_k]  \}$ be an optimal solution of \TOPK{} over instance $G$.
Then
$\sum_{i=1}^{k} dens(G[W_i]) \geq \frac{1}{2} \sum_{i=1}^{k} dens(G[W^o_i]).
$
\end{lemma}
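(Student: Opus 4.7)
The plan is to decompose $\mathcal{W}$ according to the two phases of Algorithm~\ref{algo:approxkNotConstant}. Write $\mathcal{W}_1 = \{G[W_1], \ldots, G[W_t]\}$ for the subgraphs produced by Phase~1 and $\mathcal{W}_2 = \mathcal{W} \setminus \mathcal{W}_1$ for those produced by Phase~2, so that $|\mathcal{W}_2| = k - t$. I would order the optimal solution so that $dens(G[W^o_1]) \geq dens(G[W^o_2]) \geq \cdots \geq dens(G[W^o_k])$, and (as in the proof of Lemma~\ref{lem:ApproxRkconstant}) note that by construction of Phase~1 the densities of its output are non-increasing, i.e.\ $dens(G[W_1]) \geq dens(G[W_2]) \geq \cdots \geq dens(G[W_t])$.

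First I would handle the Phase~1 contribution. Each $G[W_i]$ with $2 \leq i \leq t$ is an optimal solution of $\DENSEK$ on $(G, \{G[W_1], \ldots, G[W_{i-1}]\})$ by Theorem~\ref{teo:OptPhase1}, and $G[W_1]$ is a densest subgraph of $G$. The same pigeonhole induction used in Lemma~\ref{lem:ApproxRkconstant} then applies verbatim: among $G[W^o_1], \ldots, G[W^o_i]$ at least one is distinct from all of $G[W_1], \ldots, G[W_{i-1}]$ and is hence a feasible competitor for $G[W_i]$, so optimality yields $dens(G[W_i]) \geq dens(G[W^o_i])$. Summing,
\[
\sum_{i=1}^{t} dens(G[W_i]) \geq \sum_{i=1}^{t} dens(G[W^o_i]).
\]

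Next I would handle the Phase~2 contribution using Lemma~\ref{lem:secondPart}, which guarantees that every $G[W'] \in \mathcal{W}_2$ satisfies $dens(G[W']) \geq \frac{1}{2}\, dens(G[W_j])$ for some $G[W_j] \in \mathcal{W}_1$. Since the Phase~1 densities are non-increasing we have $dens(G[W_j]) \geq dens(G[W_t])$, and the Phase~1 bound of the previous paragraph specialised to $i=t$ gives $dens(G[W_t]) \geq dens(G[W^o_t]) \geq dens(G[W^o_m])$ for every $m \geq t+1$. Hence each of the $k-t$ subgraphs in $\mathcal{W}_2$ has density at least $\frac{1}{2}\, dens(G[W^o_m])$ for every $m \geq t+1$, which after matching them up yields
\[
\sum_{G[W'] \in \mathcal{W}_2} dens(G[W']) \geq \frac{1}{2} \sum_{m=t+1}^{k} dens(G[W^o_m]).
\]

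Adding the two bounds and using $1 \geq \frac{1}{2}$ on the Phase~1 half gives the claim $\sum_{i=1}^{k} dens(G[W_i]) \geq \frac{1}{2} \sum_{i=1}^{k} dens(G[W^o_i])$. The main obstacle I anticipate is bridging the two phases cleanly: Lemma~\ref{lem:secondPart} only bounds a Phase~2 subgraph against \emph{some} Phase~1 subgraph, whereas the statement requires a comparison against the tail $G[W^o_{t+1}], \dots, G[W^o_k]$ of the optimum. The chain $dens(G[W_j]) \geq dens(G[W_t]) \geq dens(G[W^o_t]) \geq dens(G[W^o_m])$ is what supplies that bridge, so the essential care is to justify every link, in particular the middle one, which inherits the pigeonhole argument from Phase~1 and crucially relies on Phase~1 terminating precisely when Property~\ref{prop:Stop} first holds.
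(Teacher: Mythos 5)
Your proof is correct, and every link in your chain does hold: the Phase~1 densities are non-increasing because each later \DENSEK{} instance only adds forbidden subgraphs, so any later output is feasible for any earlier instance; the one-to-one bound $dens(G[W_i]) \geq dens(G[W^o_i])$ for $i \leq t$ follows from Theorem~\ref{teo:OptPhase1} (applicable since the while-condition guarantees Property~\ref{prop:Stop} fails at every Phase~1 iteration) together with the pigeonhole against the sorted optimum, exactly as in Lemma~\ref{lem:ApproxRkconstant}; and Lemma~\ref{lem:secondPart} supplies the half-factor for Phase~2. The paper reaches the same bound by a different organization: it argues by induction on $k$, showing at each step that the last subgraph $G[W_k]$ — whether it comes from Phase~1 or Phase~2 — satisfies $dens(G[W_k]) \geq \frac{1}{2} dens(G[W'_k]) \geq \frac{1}{2} dens(G[W^o_k])$, where $G[W'_k]$ is an optimal solution of \DENSEK{} with respect to $\{G[W_1],\dots,G[W_{k-1}]\}$ acting as an intermediary, again via Lemma~\ref{lem:secondPart}, Theorem~\ref{teo:OptPhase1} and the pigeonhole. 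So the two arguments rest on the same pillars, but yours replaces the induction by a single phase decomposition with a prefix/suffix matching of the optimum. What your version buys: it sidesteps the slightly delicate step of applying the induction hypothesis to a truncated run of Algorithm~\ref{algo:approxkNotConstant}, and it in fact proves something marginally stronger, namely that the Phase~1 prefix loses nothing (factor $1$ against $dens(G[W^o_1]),\dots,dens(G[W^o_t])$), the factor $\frac{1}{2}$ being paid only on the Phase~2 tail. What the paper's version buys is a uniform treatment of all indices without splitting into phases, at the cost of introducing the auxiliary optimum $G[W'_k]$ at every inductive step.
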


We can conclude the analysis of the
approximation factor with the following result.

\begin{theorem}
\label{teo:finalApproxGeneralRes}
Let $\mathcal{W}=  \{ G[W_1], \dots, G[W_k]  \}$ be the solution
returned by Algorithm \ref{algo:approxkNotConstant} 
and let
$\mathcal{W}^o=  \{ G[W^o_1], \dots, G[W^o_k]  \}$ be an optimal solution of \TOPK{} over instance $G$. Then
$
r(\mathcal{W}) \geq \frac{1}{2} r(\mathcal{W}^o)
$.
\end{theorem}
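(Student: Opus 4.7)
The plan is to combine the density bound of Lemma \ref{lem:GeneralApproxLemma} with an analogous distance bound to get the full $\frac{1}{2}$-approximation on $r(\mathcal{W})$. The objective function decomposes as $r(\mathcal{W}) = dens(\mathcal{W}) + \lambda \sum_{i<j} d(G[W_i],G[W_j])$, so it suffices to bound each of the two summands by $\frac{1}{2}$ times the corresponding summand of $r(\mathcal{W}^o)$ and add.

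First I would handle the density term, which is immediate: Lemma \ref{lem:GeneralApproxLemma} gives exactly $dens(\mathcal{W}) \geq \frac{1}{2} dens(\mathcal{W}^o)$, so that half of the inequality is done.

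Next I would argue the distance term. The key observation is that all subgraphs in the returned $\mathcal{W}$ are pairwise distinct: in Phase~1 each newly added $G[Z]$ is an (optimal) solution of $\DENSEK$ over the current $\mathcal{W}$, so by definition of that problem $Z \neq W_i$ for every $W_i$ already chosen; in Phase~2 the algorithm explicitly only adds ``densest distinct subgraphs (not already in $\mathcal{W}$)''. Therefore, following the exact reasoning of Lemma \ref{lem:ApproxDistance}, for every pair $i \neq j$ we have $d(G[W_i],G[W_j]) = 2 - \frac{|W_i \cap W_j|^2}{|W_i||W_j|} \geq 1$, since $|W_i \cap W_j|^2 \leq |W_i||W_j|$. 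On the other side, the definition of $d$ gives $d(G[W^o_i],G[W^o_j]) \leq 2$ for every pair. Thus each term in the left-hand sum is at least $1$ and each term in the right-hand sum is at most $2$, yielding
\[
\lambda \sum_{i=1}^{k-1} \sum_{j=i+1}^k d(G[W_i],G[W_j]) \;\geq\; \frac{1}{2}\, \lambda \sum_{i=1}^{k-1} \sum_{j=i+1}^k d(G[W^o_i],G[W^o_j]).
\]

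Finally I would add the two inequalities to obtain $r(\mathcal{W}) \geq \frac{1}{2} r(\mathcal{W}^o)$. There is no real obstacle here: the density half is a black-box application of Lemma \ref{lem:GeneralApproxLemma}, and the distance half only requires the easy distinctness check, which both phases of Algorithm \ref{algo:approxkNotConstant} enforce by construction. The only mildly subtle point worth spelling out is the justification that Phase~2 does not accidentally duplicate a subgraph already in $\mathcal{W}$, which is why the explicit ``distinct'' qualifier in the algorithm's description matters.
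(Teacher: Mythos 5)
Your proposal is correct and follows essentially the same route as the paper: apply Lemma \ref{lem:GeneralApproxLemma} for the density term, observe that all subgraphs in $\mathcal{W}$ are pairwise distinct so each pairwise distance is at least $1$ while each optimal pairwise distance is at most $2$, and add the two bounds. The extra remark on why Phase 1 and Phase 2 guarantee distinctness is a harmless elaboration of the paper's one-line appeal to Lemma \ref{lem:ApproxDistance}.
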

\begin{proof}
First, by Lemma \ref{lem:GeneralApproxLemma}, 
it holds $dens(\mathcal{W}) \geq \frac{1}{2} dens(\mathcal{W}^o)$.
Similarly to the proof Lemma \ref{lem:ApproxDistance}, since the subgraphs
in $\{ G[W_1], \dots, G[W_k]  \}$ are all distinct, it holds $d(G[W_i],G[W_j]) \geq 1$, for each $i$, $j$ with $1 \leq i \leq k$, 
$1 \leq j \leq k$ and $i \neq j$,
and by definition $d(G[W^o_i],G[W^o_j]) \leq 2$, thus
$
\lambda \sum_{i=1}^{k-1} \sum_{j=i+1}^k d(G[W_i],G[W_j])
\geq \frac{1}{2} \lambda \sum_{i=1}^{k-1} \sum_{j=i+1}^k d(G[W_i^o],G[W_j^o])
$.
We can conclude that $
r(\mathcal{W}) \geq \frac{1}{2} r(\mathcal{W}^o).$
\end{proof}

\section{Complexity of \TOPK{}}
\label{sec:complexity}

In this section, we consider the computational complexity of \TOPK{} and we show that the problem is NP-hard even if $k=3$. We denote this restriction
of the problem by \TOPT{}.
We prove the result by giving a reduction from 
{\sf 3-Clique Partition}, 
which is NP-complete \cite{DBLP:conf/coco/Karp72}. 
We recall that {\sf 3-Clique Partition}, 
given an input graph $G_P=(V_P,E_P)$, 
asks for a partition of $V_P$ 
into $V_{P,1}$, $V_{P,2}$, $V_{P,3}$
such that $V_P = V_{P,1} \uplus V_{P,2} \uplus V_{P,3}$ and each $G[V_{P,i}]$, 
with $1 \leq i \leq 3$, is a clique.

Given an instance $G_P=(V_P,E_P)$ of 
{\sf 3-Clique Partition},
the input graph $G=(V,E)$ of \TOPT{} is identical to $G_P=(V_P,E_P)$.
Define $\lambda = 3|V|^3$.
In order to define a reduction from {\sf 3-Clique Partition} to \TOPT{},
we show the following results.

\begin{lemma}
\label{lem:3Case1}
Let $G_P=(V_P,E_P)$ be a graph instance of {\sf 3-Clique Partition} and let 
$G=(V,E)$ be the corresponding graph instance of \TOPT{}.
Given three cliques $G_P[V_{P,1}]$, $G_P[V_{P,2}]$, $G_P[V_{P,3}]$ such that $V_{P,1}$, $V_{P,2}$, $V_{P,3}$ partition $V_P$,
we can compute in polynomial time 
a set $\mathcal{W}= \{G[V_1], G[V_2], G[V_3] \}$ 
such that
$r(\mathcal{W}) \geq \frac{|V| -3}{2} +  18|V|^3$.
\end{lemma}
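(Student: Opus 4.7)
The plan is to take $\mathcal{W}$ to consist of exactly the three given clique subgraphs, i.e., set $V_i := V_{P,i}$ for $i \in \{1,2,3\}$ (which is computable in linear time), and then verify the claimed lower bound by computing the density contribution and the pairwise-distance contribution separately.

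For the density term, I would use that each $G[V_i]$ is a clique on $|V_i|$ vertices in $G$, because $G = G_P$ and $G_P[V_{P,i}]$ is a clique; so $dens(G[V_i]) = \binom{|V_i|}{2}/|V_i| = (|V_i|-1)/2$. Summing over $i$ and invoking $|V_1|+|V_2|+|V_3| = |V|$ (since the $V_{P,i}$'s partition $V_P = V$) gives exactly $(|V|-3)/2$, matching the first term of the target bound. For the distance term, the partition property gives $V_i \cap V_j = \emptyset$ for $i \neq j$, and assuming each block is non-empty the three sets are pairwise distinct, so the distance definition yields $d(G[V_i], G[V_j]) = 2 - 0 = 2$ for each of the $\binom{3}{2} = 3$ unordered pairs; multiplying the resulting sum of $6$ by $\lambda = 3|V|^3$ produces $18|V|^3$, which is the second term of the target bound. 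Adding the two contributions gives the desired inequality with equality.

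The only subtle point, and essentially the only obstacle I foresee, is the degenerate case where one of the clique-partition blocks is empty, in which the density of that subgraph is undefined and the three subgraphs may not all be distinct. This can be handled either by adopting the usual convention that a 3-clique-partition has three non-empty blocks, or by a small padding argument: when $|V| \geq 3$, if some block is empty one can move a single vertex from a larger block to fill it, which keeps all three subgraphs cliques (singletons are trivially cliques), preserves the total vertex count $|V|$, and decreases $\sum_i (|V_i|-1)/2$ by at most a constant, so the bound $(|V|-3)/2 + 18|V|^3$ still holds comfortably thanks to the dominant $18|V|^3$ term. Beyond this bookkeeping, the argument is purely a substitution into the objective $r(\cdot)$.
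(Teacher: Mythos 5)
Your proposal is correct and follows essentially the same route as the paper: take $V_i = V_{P,i}$ (the graphs are identical), compute the density of each clique as $(|V_i|-1)/2$ summing to $(|V|-3)/2$, and use disjointness to get distance $2$ for each of the three pairs, yielding $\lambda\cdot 6 = 18|V|^3$. The paper silently assumes the three blocks are non-empty, so your extra remark about the degenerate case is harmless bookkeeping beyond what the paper does (note only that the distance term can never exceed $18|V|^3$, so the correct justification there is that the padded densities still sum to exactly $(|V|-3)/2$, not that the $18|V|^3$ term absorbs a loss).
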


\begin{lemma}
\label{lem:3Case2}
Let $G_P=(V_P,E_P)$ be a graph instance of 
{\sf 3-Clique Partition} and let 
$G=(V,E)$ be the corresponding graph instance of \TOPT{}.
Given  a solution $\mathcal{W}= \{G[V_1], G[V_2], G[V_3] \}$ of \TOPT{} on
instance $G$, 
with $r(\mathcal{W}) \geq \frac{|V| -3}{2}+  18|V|^3$,
we can compute in polynomial time three cliques
$G_P[V_{P,1}]$, $G_P[V_{P,2}]$, $G_P[V_{P,3}]$ of $G_P$ such that $V_{P,1}$, $V_{P,2}$, $V_{P,3}$  partition $V_P$.
\end{lemma}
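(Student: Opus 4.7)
The strategy is to exploit that $\lambda = 3|V|^3$ dwarfs any achievable density sum (which is at most $3(|V|-1)/2$ across three subgraphs), so the hypothesis $r(\mathcal{W}) \geq \frac{|V|-3}{2} + 18|V|^3$ forces the distance term $\lambda \sum_{i<j} d(G[V_i],G[V_j])$ to attain essentially its maximum $6\lambda = 18|V|^3$, and the density term to attain its maximum $\frac{|V|-3}{2}$. These saturations will in turn force the combinatorial structure we want, namely a partition of $V$ into three cliques.

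I would begin by proving pairwise disjointness of $V_1,V_2,V_3$. If instead some pair, say $V_1$ and $V_2$, share at least one vertex, then $|V_1 \cap V_2|^2/(|V_1||V_2|) \geq 1/|V|^2$, so the distance sum drops below its maximum $6$ by at least $1/|V|^2$; multiplied by $\lambda$, the loss is at least $3|V|^3/|V|^2 = 3|V|$. Since $dens(\mathcal{W}) \leq 3(|V|-1)/2 < \tfrac{3|V|}{2}$, this loss cannot be offset together with the required surplus $\frac{|V|-3}{2}$: a short arithmetic check (using only $|V|\ge 1$) shows $r(\mathcal{W}) < \frac{|V|-3}{2}+18|V|^3$, contradicting the hypothesis. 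Therefore $V_1,V_2,V_3$ are pairwise disjoint, so $\lambda\sum_{i<j} d(G[V_i],G[V_j]) = 6\lambda$ exactly, and consequently $dens(\mathcal{W}) \geq \frac{|V|-3}{2}$.

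I would then combine this density lower bound with two elementary upper bounds: disjointness gives $\sum_i |V_i| \leq |V|$, and each induced subgraph satisfies $dens(G[V_i]) \leq (|V_i|-1)/2$, with equality if and only if $G[V_i]$ is a clique. This yields $dens(\mathcal{W}) \leq \frac{\sum_i |V_i| - 3}{2} \leq \frac{|V|-3}{2}$, and both inequalities must be tight. Hence $V_1,V_2,V_3$ partition $V=V_P$ and each $G_P[V_i]$ is a clique of $G_P$. Setting $V_{P,i}:=V_i$ for $i=1,2,3$ yields the required 3-clique partition, and the construction is trivially polynomial (in fact constant) time since we merely read off the given solution. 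The main technical point, and the only place that uses the specific value of $\lambda$, is the quantitative overlap argument in the previous paragraph; the choice $\lambda = 3|V|^3$ is made precisely so that a single shared vertex between any two $V_i$'s destroys more of $\lambda\sum d$ than any conceivable density increase could compensate.
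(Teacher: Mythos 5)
Your proposal is correct and follows essentially the same route as the paper: first rule out any overlap by observing that a single shared vertex costs at least $\lambda/|V|^2 = 3|V|$ in the distance term, which no density gain (bounded by $3(|V|-1)/2$) can offset, and then use the resulting bound $dens(\mathcal{W}) \geq \frac{|V|-3}{2}$ to force $V_1,V_2,V_3$ to cover $V$ and each induced subgraph to be a clique. Your packaging of the second step as a single tightness argument ($dens(\mathcal{W}) \leq \frac{\sum_i|V_i|-3}{2} \leq \frac{|V|-3}{2}$ with both inequalities tight) is just a slightly more compact version of the paper's separate coverage and clique arguments.
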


We can conclude that \TOPT{} is NP-hard.

\begin{theorem}
\TOPT{} is NP-hard.
\end{theorem}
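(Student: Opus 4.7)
The plan is to bundle Lemmas \ref{lem:3Case1} and \ref{lem:3Case2} into a polynomial-time many-one reduction from \textsf{3-Clique Partition}, which is NP-complete, to the decision version of \TOPT{}. Given an instance $G_P=(V_P,E_P)$ of \textsf{3-Clique Partition}, I would produce the instance $(G,\lambda)=(G_P,\,3|V|^3)$ of \TOPT{} together with the threshold $\theta = \frac{|V|-3}{2} + 18|V|^3$. Computing this reduction is clearly polynomial: $G$ is just a copy of $G_P$, and $\lambda$ and $\theta$ have polynomial bit-length in $|V|$.

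Next I would invoke the two lemmas in succession to establish the equivalence. The forward direction is handed to us by Lemma \ref{lem:3Case1}: if $V_P$ partitions into three cliques $G_P[V_{P,1}], G_P[V_{P,2}], G_P[V_{P,3}]$, we can compute in polynomial time a set $\mathcal{W}=\{G[V_1],G[V_2],G[V_3]\}$ with $r(\mathcal{W}) \geq \theta$. The reverse direction is handed to us by Lemma \ref{lem:3Case2}: from any $\mathcal{W}=\{G[V_1],G[V_2],G[V_3]\}$ achieving $r(\mathcal{W}) \geq \theta$ we can recover in polynomial time three cliques $G_P[V_{P,1}], G_P[V_{P,2}], G_P[V_{P,3}]$ that partition $V_P$. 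Combining these two implications, the question ``does \TOPT{} on $(G,\lambda)$ admit a solution of value at least $\theta$?'' is YES precisely when $G_P$ admits a 3-clique partition, so the decision version of \TOPT{} is NP-hard, and hence so is the optimisation version.

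The genuine difficulty has been absorbed into the two preceding lemmas, so there is almost nothing left to do at the level of this theorem beyond stringing them together. The real reason the reduction works is the delicate choice of parameters: the total pairwise distance contribution in any solution with three subgraphs is at most $\binom{3}{2}\cdot 2\lambda = 6\lambda = 18|V|^3$, which already sits inside $\theta$, forcing any $\mathcal{W}$ with $r(\mathcal{W}) \geq \theta$ to realise pairwise distance essentially equal to $2$ (that is, the three chosen subgraphs must be vertex-disjoint or very close to it); the leftover slack of $\frac{|V|-3}{2}$ coincides exactly with $\sum_{i=1}^{3}\frac{n_i-1}{2}$ where $n_i=|V_{P,i}|$ and $n_1+n_2+n_3=|V|$, i.e.\ the summed density of three cliques partitioning $V_P$. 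This is the calibration that ties the density term to clique structure and forces equality cases.
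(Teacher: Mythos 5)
Your proposal is correct and follows exactly the paper's own argument: it strings together Lemma \ref{lem:3Case1} and Lemma \ref{lem:3Case2} as the two directions of a polynomial-time reduction from \textsf{3-Clique Partition} (with $\lambda = 3|V|^3$ and threshold $\frac{|V|-3}{2}+18|V|^3$), and concludes NP-hardness of \TOPT{} from the NP-completeness of \textsf{3-Clique Partition}. The closing remarks on the calibration of $\lambda$ and the threshold are accurate but belong to the content of the two lemmas rather than to this theorem.
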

\begin{proof}
From Lemma \ref{lem:3Case1} and Lemma \ref{lem:3Case2}, it follows that we have described a polynomial-time reduction from 
{\sf 3-Clique Partition} to \TOPT{}.
Since {\sf 3-Clique Partition} is NP-complete \cite{DBLP:conf/coco/Karp72}, it follows
that also \TOPT{} is NP-hard.
\end{proof}

\section{Conclusion}
\label{sec:conclusion}

We have shown that 
\TOPK{}  is NP-hard when $k=3$ and we have given two approximation
algorithms of factor $\frac{2}{3}$ and $\frac{1}{2}$, when $k$ is a constant
and when $k$ is smaller than the number of vertices in the graph, respectively.
For future works, it would be interesting to further investigate
the approximability of \TOPK{}, possibly improving
the approximation factor or improving the time complexity of our approximation algorithms. 
A second interesting open problem
is the computational complexity of \TOPK{}, in particular when $\lambda$ is a constant. 
Another open problem of theoretical interest is the computational 
complexity of \TOPK{} when $k=2$.

\bibliographystyle{splncs03}
\bibliography{biblio1}

\newpage

\section*{Appendix}

\subsection*{Pseudocode of Algorithm \ref{algo:approx3}}

\begin{algorithm}[H]
\KwData{a graph $G$ and a set $\mathcal{W} = \{ G[W_1], \dots , 
G[W_t] \}$ of subgraphs of $G$}
\KwResult{
a subgraph $G[Z]$ of $G$, with $Z \neq W_i$, 
for each $1 \leq i \leq t$, and $dens(Z)$ is maximum
}
$Z = \emptyset$\;
$dens=0$\;

\For{$U_1$, $U_2$ in $V$, with
$U_1 \cap U_2 = \emptyset$, 
$|U_1 \cup U_2| \leq t$,
such that there is no 
subgraph $G[W_i]$ in 
$\mathcal{W}$ with  
$W_i \supseteq U_1$ and 
$W_i \cap U_2 = \emptyset$
}
{
$G[X] \leftarrow$ Dense-subgraph($G[V \setminus U_2],C(U_1))$\;
$dens' \leftarrow dens(G[X])$\;
\If{$dens' > dens$}{
$dens \leftarrow dens'$\;
$Z \leftarrow X$\;
} 
}
Return($G[Z]$)\;
\caption{Returns an optimal solution for $\DENSEK$ when $k$ is a constant}
\label{algo:approx3}
\end{algorithm}

\subsection*{Proof of Theorem \ref{teo:OptDenseK}}

\setcounter{Apptheorem}{4}

\begin{Apptheorem}
\label{appendix:teo:OptDenseK}
Let $G[Z]$ be the solution returned by 
Algorithm \ref{algo:approx3}.
Then, an optimal solution $G[Z']$ of $\DENSEK$ over instance
$(G, \mathcal{W})$
has density
at most $dens(G[Z])$.
\end{Apptheorem}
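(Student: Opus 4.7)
The plan is to combine the structural characterization from Lemma \ref{lem:DenseK} with the optimality of the Extended Goldberg's Algorithm on the constrained densest subgraph problem. Given an optimal solution $G[Z']$ of \DENSEK{} over $(G,\mathcal{W})$, I would first apply Lemma \ref{lem:DenseK} to obtain a partition $(U_1^\ast, U_2^\ast)$ of at most $t$ vertices satisfying (i) $Z' \supseteq U_1^\ast$, (ii) $Z' \cap U_2^\ast = \emptyset$, and (iii) no $G[W_j] \in \mathcal{W}$ has both $W_j \supseteq U_1^\ast$ and $W_j \cap U_2^\ast = \emptyset$.

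Next, I would observe that the pair $(U_1^\ast, U_2^\ast)$ is precisely one of the pairs considered in the \textbf{for} loop of Algorithm \ref{algo:approx3}, since it satisfies $U_1^\ast \cap U_2^\ast = \emptyset$, $|U_1^\ast \cup U_2^\ast| \leq t$, and the forbidden-containment condition (iii). Hence, during the corresponding iteration, the algorithm invokes Extended Goldberg's Algorithm on $G[V \setminus U_2^\ast]$ with constrained set $U_1^\ast$, producing a subgraph $G[X]$ that is, by the correctness of the Extended Goldberg's Algorithm, a densest subgraph among all subgraphs of $G[V \setminus U_2^\ast]$ that contain $U_1^\ast$.

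Since by (i) and (ii) $G[Z']$ itself is a candidate for this constrained problem, I get $dens(G[X]) \geq dens(G[Z'])$. The key step is then to verify that $G[X]$ is in fact a feasible solution of \DENSEK{}, i.e., that $X \neq W_j$ for every $j \in \{1,\dots,t\}$. This follows from condition (iii): since $G[X]$ is a subgraph of $G[V \setminus U_2^\ast]$ containing $U_1^\ast$, we have $X \supseteq U_1^\ast$ and $X \cap U_2^\ast = \emptyset$, whereas no $G[W_j] \in \mathcal{W}$ satisfies both of these properties simultaneously. Hence $X$ differs from every $W_j$.

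Finally, since Algorithm \ref{algo:approx3} keeps track of the maximum density over all iterations, $dens(G[Z]) \geq dens(G[X]) \geq dens(G[Z'])$, which proves the theorem. The subtle point I expect to take some care with is the distinctness argument: one must be sure that the third clause of Lemma \ref{lem:DenseK} is genuinely a forbidden-containment condition on $(U_1^\ast, U_2^\ast)$ (not merely on $Z'$), so that it transfers verbatim to any other subgraph enclosing $U_1^\ast$ and disjoint from $U_2^\ast$; once that is in hand, the rest is a straightforward chain of inequalities.
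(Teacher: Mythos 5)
Your proposal is correct and follows essentially the same route as the paper's proof: apply Lemma \ref{lem:DenseK} to the optimal solution to obtain the pair $(U_1,U_2)$, note that this pair is enumerated by Algorithm \ref{algo:approx3}, and use the optimality of the Extended Goldberg's Algorithm on $G[V\setminus U_2]$ with constrained set $U_1$ to conclude $dens(G[Z]) \geq dens(G[Z'])$. Your extra observation that the forbidden-containment condition depends only on $(U_1,U_2)$, so that the constrained densest subgraph found in that iteration is automatically distinct from every $W_j$, is a slightly more explicit justification than the paper gives, but not a different argument.
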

\begin{proof}
Consider a collection $\mathcal{W} = \{ G[W_1], \dots , 
G[W_t] \}$ of subgraphs of $G$ and let $G[Z]$ be the solution
returned by Algorithm \ref{algo:approx3}.
By Lemma \ref{lem:DenseK} it follows
that for each subgraph  
distinct from those in $\mathcal{W}$, hence
also for an optimal solution $G[X]$ of  $\DENSEK$ over instance
$(G,\mathcal{W})$,
there exist at most $t$ vertices $u_1, \dots , u_{t}$,
that can be partitioned into two sets $U_1$, $U_2$
such that $X \supseteq U_1$, 
$X \cap U_2 = \emptyset$ and there is no $G[W_j]$ in $\mathcal{W}$, with $1 \leq j \leq t$,
such that $W_j \supseteq U_1$ and $W_j \cap U_2 = \emptyset$.
The subgraph $G[Z]$ returned by Algorithm \ref{algo:approx3} 
is computed as a densest subgraph 
over each subset $U$ of at most
$t$ vertices and for each partition 
of $U$ into two sets $U'_1$ and $U'_2$, 
such that $Z \supseteq U'_1$, 
$Z \cap U'_2 = \emptyset$  
and there is no $G[W_j]$ in $\mathcal{W}$, 
with $1 \leq j \leq t$,
such that $W_j \supseteq U'_1$ and $W_j \cap U'_2 = \emptyset$.
This holds also when $U'_1 =U_1$
and $U'_2 = U_2$, hence $dens(G[Z]) \geq dens(G[X])$.
\end{proof}

\subsection*{Proof of Theorem \ref{teo:finalApprox}}

\setcounter{Apptheorem}{7}

\begin{Apptheorem}
\label{appendix:teo:finalApprox}
Let $\mathcal{W}=  \{ G[W_1], \dots, G[W_k]  \}$ 
be the solution 
returned by Algorithm \ref{algo:approxDef} and 
let $\mathcal{W}'=  \{ G[W'_1], \dots, G[W'_k]  \}$
be the solution returned by Algorithm $A_T$.
Let
$\mathcal{W}^o=  \{ G[W^o_1], \dots,$ $G[W^o_k]  \}$ 
be an optimal solution of \TOPK{} over instance $G$. Then
$ 
\max(r(\mathcal{W}), r(\mathcal{W}')) \geq 
\frac{2}{3}~r(\mathcal{W}^o).
$
\end{Apptheorem}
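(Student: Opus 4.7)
The plan is to reduce the theorem to a simple two-case comparison after extracting two complementary lower bounds on the two candidate solutions. Let me abbreviate $D = dens(\mathcal{W}^o)$ and $L = \lambda \sum_{i=1}^{k-1}\sum_{j=i+1}^{k} d(G[W^o_i], G[W^o_j])$, so that $r(\mathcal{W}^o) = D + L$. The goal is to show $\max(r(\mathcal{W}), r(\mathcal{W}')) \geq \tfrac{2}{3}(D+L)$.

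First I would bound $r(\mathcal{W})$ from below using the two inequalities of Lemma \ref{lem:ApproxRkconstant}: namely $dens(\mathcal{W}) \geq D$ and $\lambda \sum_{i<j} d(G[W_i], G[W_j]) \geq L/2$. Adding these yields $r(\mathcal{W}) \geq D + L/2$. Next I would analyze $r(\mathcal{W}')$: since each $G[W'_i]$ is a singleton, $dens(\mathcal{W}') = 0$, and since the singletons are pairwise distinct with empty intersection, $d(G[W'_i], G[W'_j]) = 2$ for all $i \neq j$. Hence $r(\mathcal{W}') = 2\lambda \binom{k}{2}$. Using $d(G[W^o_i], G[W^o_j]) \leq 2$ from the definition, I get $L \leq 2\lambda \binom{k}{2}$, and therefore $r(\mathcal{W}') \geq L$.

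At this point the proof splits by comparing $D$ and $L$. If $2D \geq L$, then
\[
r(\mathcal{W}) \;\geq\; D + \tfrac{L}{2} \;\geq\; \tfrac{2}{3}(D + L),
\]
which is equivalent to $D \geq L/2$, i.e.\ the assumed case. If instead $L \geq 2D$, then
\[
r(\mathcal{W}') \;\geq\; L \;\geq\; \tfrac{2}{3}(D + L),
\]
which is equivalent to $L \geq 2D$. Since one of these two cases must hold, $\max(r(\mathcal{W}), r(\mathcal{W}')) \geq \tfrac{2}{3}(D+L) = \tfrac{2}{3} r(\mathcal{W}^o)$.

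There is no real obstacle in this argument, since all of the work is already contained in Lemma \ref{lem:ApproxRkconstant} (which handles the density and distance comparisons with $\mathcal{W}^o$) and in the trivial evaluation of $r(\mathcal{W}')$ on singletons. The only thing to be slightly careful about is ensuring that $|V|$ is large enough so that $k$ distinct singletons exist, which is guaranteed by the standing assumption $k < |V|$ (and $|V| > 5$). The proof is therefore a short case analysis on whether the density contribution or the distance contribution dominates in the optimum.
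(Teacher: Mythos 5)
Your proof is correct and follows essentially the same route as the paper: both use Lemma \ref{lem:ApproxRkconstant} to get $r(\mathcal{W}) \geq dens(\mathcal{W}^o) + \tfrac{1}{2}L$, observe that the singleton solution satisfies $r(\mathcal{W}') \geq L$, and split on whether $L \geq 2\,dens(\mathcal{W}^o)$ or not. Your write-up is just a cleaner algebraic packaging of the identical case analysis, so there is nothing to fix.
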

\begin{proof}
By Lemma \ref{lem:ApproxRkconstant}, it holds 
$dens(\mathcal{W}) \geq dens(\mathcal{W}^o)$ 
and 
\[
\lambda \sum_{i=1}^{k-1} \sum_{j=i+1}^k d(G[W_i],G[W_j])
\geq \frac{1}{2} \lambda \sum_{i=1}^{k-1} \sum_{j=i+1}^k d(G[W_i^o],G[W_j^o]).
\]

Algorithm $A_T$ returns solution 
$\mathcal{W}'=  \{ G[W'_1], \dots, G[W'_k]  \}$ such that

\[
\lambda \sum_{i=1}^{k-1} \sum_{j=i+1}^k d(G[W'_i],G[W'_j])
\geq \lambda \sum_{i=1}^{k-1} \sum_{j=i+1}^k d(G[W_i^o],G[W_j^o]).
\]

Assume that $\lambda \sum_{i=1}^{k-1} \sum_{j=i+1}^k d(G[W_i^o],G[W_j^o]) 
\geq 2~dens(\mathcal{W}^o)$.
Then 
\[
\frac{1}{3}~\lambda \sum_{i=1}^{k-1} \sum_{j=i+1}^k d(G[W'_i],G[W'_j])
\geq \frac{2}{3}~dens(\mathcal{W}^o)
\]
thus, 
\[
\lambda \sum_{i=1}^{k-1} \sum_{j=i+1}^k d(G[W'_i],G[W'_j])
\geq \]
\[
\frac{2}{3}~\lambda \sum_{i=1}^{k-1} \sum_{j=i+1}^k d(G[W_i^o],G[W_j^o]) + 
\frac{1}{3} \lambda \sum_{i=1}^{k-1} \sum_{j=i+1}^k d(G[W'_i],G[W'_j])
\geq
\]
\[
\frac{2}{3}~\lambda \sum_{i=1}^{k-1} \sum_{j=i+1}^k d(G[W_i^o],G[W_j^o]) + \frac{2}{3}~dens(\mathcal{W}^o)
\]
thus in this case $A_T$ returns a solution having approximation
factor $\frac{2}{3}$.

Assume that $\lambda \sum_{i=1}^{k-1} \sum_{j=i+1}^k d(W_i^o,W_j^o) 
< 2~dens(\mathcal{W}^o)$.
It holds 
\[
dens(\mathcal{W}) \geq dens(\mathcal{W}^o)
=
\frac{2}{3}~dens(\mathcal{W}^o) + \frac{1}{3}~dens(\mathcal{W}^o)
> 
\]
\[
\frac{2}{3}~dens(\mathcal{W}^o)  + ~\frac{1}{6} \lambda \sum_{i=1}^{k-1} \sum_{j=i+1}^k d(G[W_i^o],G[W_j^o]). 
\]
By Lemma \ref{lem:ApproxRkconstant}
\[
\lambda \sum_{i=1}^{k-1} \sum_{j=i+1}^k d(G[W_i],G[W_j])
\geq \frac{1}{2} \lambda \sum_{i=1}^{k-1} \sum_{j=i+1}^k d(G[W_i^o],G[W_j^o]).
\]
We can conclude that 
\[
r(\mathcal{W}) > \frac{2}{3}~dens(\mathcal{W}^o)  + 
\frac{1}{2} \lambda \sum_{i=1}^{k-1} \sum_{j=i+1}^k d(G[W_i^o],G[W_j^o]) 
+\frac{1}{6} \lambda \sum_{i=1}^{k-1} \sum_{j=i+1}^k d(G[W_i^o],G[W_j^o]) 
\]
hence 
$r(\mathcal{W}) \geq \frac{2}{3}~r(\mathcal{W}^o)$.
\end{proof}

\subsection*{Pseudocode of Algorithm \ref{algo:approx4}}

\begin{algorithm}[H]
\KwData{a graph $G$ and a set $\mathcal{W} = \{ G[W_1], \dots , 
G[W_t] \}$ of subgraphs of $G$, such that Property \ref{prop:Stop} does not hold}
\KwResult{a subgraph $G[Z]$ of $G$, with $Z \neq W_i$, 
for each $1 \leq i \leq t$, and $dens(Z)$ is maximum}
$Z \leftarrow \emptyset$\;
$dens \leftarrow 0$\;

\For{ Each subset $\{u_1,u_2,u_3\} \subseteq V$ of at most
three vertices, such that there is no 
subgraph $G[W_i]$ in $\mathcal{W}$ with  
$u_1,u_2 \in W_i$ and 
$u_3 \notin W_i$ 
}
{
$G[X] \leftarrow$ Dense-subgraph($G[V \setminus \{ u_3 \}],C(\{u_1,u_2\}))$\;
$dens' \leftarrow dens(G[X])$\;
\If{$dens' > dens$}{
$dens \leftarrow dens'$\;
$Z \leftarrow X$\;
} 
}
Return $(G[Z])$\;
\caption{Returns an optimal solution for $\DENSEK$ when Property \ref{prop:Stop} does not hold}
\label{algo:approx4}
\end{algorithm}

\subsection*{Proof of Theorem \ref{teo:OptPhase1}}

\setcounter{Apptheorem}{9}

\begin{Apptheorem}
\label{appendix:teo:OptPhase1}
Let $G[Z]$ be the solution returned by Algorithm \ref{algo:approx4}.
Then, an optimal solution of $\DENSEK$ over instance 
$(G,\mathcal{W})$ when
Property \ref{prop:Stop} does not hold has density
at most $dens(G[Z])$.
\end{Apptheorem}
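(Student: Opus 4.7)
The plan is to mirror the argument used for Theorem \ref{teo:OptDenseK}, but replace the combinatorial witness of Lemma \ref{lem:DenseK} with the stronger three-vertex witness of Lemma \ref{lem:PrelAprrox}, which is available precisely because Property \ref{prop:Stop} does not hold. Fix an optimal solution $G[X^*]$ of $\DENSEK$ on instance $(G,\mathcal{W})$. Since $G[X^*]$ is distinct from every subgraph in $\mathcal{W}$, Lemma \ref{lem:PrelAprrox} supplies three vertices $u_1, u_2, u_3 \in V$ with $u_1, u_2 \in X^*$ and $u_3 \notin X^*$ such that no $G[W_j] \in \mathcal{W}$ satisfies $\{u_1, u_2\} \subseteq W_j$ together with $u_3 \notin W_j$.

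Next, I would match this triple to an iteration of the \textbf{for} loop in Algorithm \ref{algo:approx4}. The filter on the loop is exactly ``no $G[W_i]$ in $\mathcal{W}$ with $u_1, u_2 \in W_i$ and $u_3 \notin W_i$'', so the triple produced by Lemma \ref{lem:PrelAprrox} is enumerated by the algorithm. In that iteration, the Extended Goldberg's Algorithm is invoked on $G[V \setminus \{u_3\}]$ with constrained set $C(\{u_1, u_2\})$, and by \cite{Zou2013} it returns a subgraph $G[X]$ of maximum density among all subgraphs of $G[V \setminus \{u_3\}]$ containing $\{u_1, u_2\}$. Because $G[X^*]$ is itself such a subgraph (it avoids $u_3$ and contains $u_1, u_2$), optimality gives $dens(G[X]) \geq dens(G[X^*])$.

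Finally, since the algorithm returns $G[Z]$ as the densest subgraph produced across all iterations, we have $dens(G[Z]) \geq dens(G[X]) \geq dens(G[X^*])$, which is exactly the statement of the theorem. The only non-routine step is verifying that the selection predicate of the \textbf{for} loop is precisely the condition certified by Lemma \ref{lem:PrelAprrox}; once this alignment is checked, the rest reduces to the guarantee of the Extended Goldberg's Algorithm on a constrained densest-subgraph instance. I do not expect any genuine obstacle: the work has been pushed into Lemma \ref{lem:PrelAprrox}, whose well-ordered-set argument is the real content, and the present theorem is the short corollary that packages it together with the algorithmic enumeration.
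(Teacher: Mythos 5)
Your proposal is correct and follows essentially the same route as the paper's own proof: invoke Lemma \ref{lem:PrelAprrox} to obtain the witness vertices for the optimal solution, observe that the \textbf{for} loop of Algorithm \ref{algo:approx4} enumerates exactly that case, and conclude via the optimality guarantee of the constrained densest-subgraph computation that $dens(G[Z]) \geq dens(G[X^*])$. Your write-up is in fact slightly more explicit than the paper's about the role of the Extended Goldberg's Algorithm and about taking the maximum over iterations, but there is no substantive difference in the argument.
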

\begin{proof}
Given $(G,\mathcal{W})$, consider a subgraph
$G[X]$ of maximal density distinct from the subgraphs in $\mathcal{W}$.
By Lemma \ref{lem:PrelAprrox}, it follows
that there exist at most three vertices $u_1$, $u_2$, $u_3$ such that $u_1, u_2 \in X$ and $u_3 \notin X$
and there is  no subgraph in $\mathcal{W}$  
satisfying the same property.
The subgraph $G[Z]$ returned by
Algorithm \ref{algo:approx4}
is computed as a densest subgraph over each
subset of at most three vertices $u'_1, u'_2, u'_3 \in V$
such that $u'_1, u'_2 \in Z$ and $u'_3 \notin Z$
and there is no subgraph in $\mathcal{W}$  
satisfying the same property, hence
also in the case $u'_i = u_i$, with $1 \leq i \leq 3$.
It follows that  $dens(G[Z])\geq dens(G[X])$.
\end{proof}

\setcounter{Applemma}{10}

\subsection*{Proof of Lemma \ref{lem:secondPartCorrect}}

\begin{Applemma}
\label{appendix:lem:secondPartCorrect}
$|\mathcal{W}|=k$ after the execution of Phase 2 of 
Algorithm \ref{algo:approxkNotConstant}.
\end{Applemma}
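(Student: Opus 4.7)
The plan is to show by a counting argument that, in each of the two sub-cases of Phase~2, the pool of candidate subgraphs contains at least $h := k - t$ elements distinct from those already in $\mathcal{W}$, where $t$ denotes $|\mathcal{W}|$ at the start of Phase~2. First I would observe that Phase~2 is entered only when $t < k$ and Property~\ref{prop:Stop} holds, so two crossing subgraphs $G[W_i], G[W_j] \in \mathcal{W}$ exist; in particular $t \geq 2$, and setting $a := |W_i \setminus W_j|$, $b := |W_j \setminus W_i|$, $c := |W_{i,j}|$, $d := |V \setminus (W_i \cup W_j)|$, the crossing condition gives $a, b, c \geq 1$ and $|V| = a + b + c + d$.

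The heart of the argument is to lower-bound the candidate pool and the number of its elements lying outside $\mathcal{W}$. Let $S_i := \{W_i \cup \{v\} : v \in V \setminus W_i\}$, $S_j := \{W_j \cup \{u\} : u \in V \setminus W_j\}$, and, in the sub-case $c \geq 4$, also $S'_j := \{W_j \setminus \{w\} : w \in W_{i,j}\}$. The subgraphs within each family are pairwise distinct because they differ by the vertex added or removed, and an elementary check shows $W_i \cup \{v\} = W_j \cup \{u\}$ forces $a = b = 1$ and uniquely determines the offending pair; hence $|S_i \cup S_j| \geq (|V| - |W_i|) + (|V| - |W_j|) - 1 = |V| - c + d - 1$. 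The crossing condition $W_i \not\subseteq W_j$ and $W_j \not\subseteq W_i$, combined with cardinality comparisons, yields that $S'_j$ is disjoint from $S_i \cup S_j$ and, crucially, that neither $W_i$ nor $W_j$ belongs to the pool.

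Plugging these estimates in gives a pool of size $|V| + d - \delta \geq |V| - 1$ in the sub-case $c \geq 4$ (where $|S'_j| = c$ cancels the $-c$ contribution), and of size $|V| - c + d - \delta$ in the sub-case $c \leq 3$, with $\delta \in \{0,1\}$ counting the possible duplicate between $S_i$ and $S_j$. In the latter sub-case, the extreme configuration $\delta = 1$ requires $a = b = 1$ and together with $c = 3, d = 0$ would force $|V| = 5$, which is excluded by the assumption $|V| > 5$; consequently the pool size is at least $|V| - 3$ in both sub-cases. Since $W_i, W_j \notin \text{pool}$, at most $t - 2$ elements of $\mathcal{W}$ can occupy the pool, leaving at least $(|V| - 3) - (t - 2) = |V| - t - 1$ usable candidates, and this is at least $k - t = h$ because $k \leq |V| - 1$. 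Thus Phase~2 adds exactly $h$ new subgraphs and $|\mathcal{W}| = k$ upon termination.

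The main obstacle is handling the sub-case $c \leq 3$, where both $S_i$ and $S_j$ can be small: the delicate point is combining the single possible duplicate between $S_i$ and $S_j$ with the $|V| > 5$ hypothesis to rule out the pathological configuration in which the count would drop one below $|V| - 3$, together with the observation that the crossing property excludes $W_i$ and $W_j$ themselves from the pool so that only $t-2$ (rather than $t$) elements of $\mathcal{W}$ can shrink the set of usable candidates.
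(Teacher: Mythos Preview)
Your proposal is correct and follows essentially the same strategy as the paper's proof: count the candidate pool in each sub-case, observe via the crossing property that $W_i$ and $W_j$ themselves are excluded from the pool, invoke $|V|>5$ to kill the single degenerate configuration in the $|W_{i,j}|\le 3$ branch, and finish with $k\le |V|-1$. Your parametrization $a,b,c,d,\delta$ and the explicit bound $(|V|-3)-(t-2)\ge k-t$ make transparent a step the paper leaves somewhat implicit (it only mentions that $G[W_i],G[W_j]\in\mathcal W$ without spelling out the subtraction of the remaining $t-2$ possible overlaps), but the underlying argument is the same.
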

\begin{proof}
Recall that we have assumed $|V| > 5$ and 
that $G[W_i]$ and $G[W_j]$
are two crossing subgraphs added in Phase 1 of 
Algorithm \ref{algo:approxkNotConstant}, with
$W_{i,j}= W_i \cap W_j$.

Consider the case that $|W_{i,j}| \leq 3$.
If $|W_i \setminus W_{i,j}| \geq 2$ or $|W_j \setminus W_{i,j}| \geq 2$, 
then 
for each $u \notin W_i \cap W_j$, there exist
at least $|V|-3$ distinct subgraphs induced by $W_i \cup \{ u\}$, with 
$u \notin W_i$, or by $W_j \cup \{ u\}$, with 
$u \notin W_j$.
Since $G[W_i]$, $G[W_j]$ are in $\mathcal{W}$ and $k \leq |V|-1$,
it follows that in this case at least $k$ subgraphs belong to $\mathcal{W}$
after Phase 2 of Algorithm \ref{algo:approxkNotConstant}.

If both $|W_i \setminus W_{i,j}| = 1$ and $|W_j \setminus W_{i,j} | = 1$, 
then for each $u \notin W_i \cap W_j$, there exists
one subgraph induced by $W_i \cup W_j$,
at least $|V|-5$ distinct subgraphs induced by $W_i \cup \{ u\}$,
with $u \in V \setminus (W_i \cup W_j)$,
and at least $|V|-5$ distinct subgraphs induced by $W_j \cup \{ v\}$,
with $v \in V \setminus (W_i \cup W_j)$. Since $|V|> 5 $, it follows
that at least $|V|-5 + |V|-5 + 1 \geq |V|-5 + 2 \geq |V|-3$ distinct subgraphs are induced by 
$W_i \cup \{ u\}$, with 
$u \notin W_i$, or by $W_j \cup \{ u\}$, with 
$u \notin W_j$.
Since $G[W_i]$, $G[W_j]$ are in $\mathcal{W}$ and $k \leq |V|-1$,
it follows that in this case at least $k$ subgraphs belong to $\mathcal{W}$
after Phase 2 of Algorithm \ref{algo:approxkNotConstant}.


Consider now the case that $|W_{i,j}| \geq  4$.
There exist at least $|V \setminus W_i|$ subgraphs induced by 
$W_i \cup \{ v \}$, with $v \in V \setminus W_i$,
and at least $|V \setminus W_j|$ subgraphs induced by 
$W_j \cup \{ u \}$, with $u \in V \setminus W_j$.
Hence there exist at least $|V \setminus W_{i,j}|-1$ subgraphs induced by 
$W_i \cup \{ v \}$, with $v \in V \setminus W_i$, 
or by $W_j \cup \{ u \}$, with $u \in V \setminus W_j$
(notice that $W_i \cup \{ v \}$ and $W_j \cup \{ u \}$
induce identical subgraphs when $W_i = \{u \}$ and 
$W_j = \{ v \}$).

There exist at least $|W_{i,j}|$ subgraphs induced by 
$W_j \setminus \{w\}$, for some $w \in W_i \cap W_j$.
Since $k \leq |V|-1$, it follows that in this case at least 
$k$ subgraphs belong to $\mathcal{W}$
after Phase 2 of Algorithm~\ref{algo:approxkNotConstant}.
%
%
\end{proof}

\setcounter{Applemma}{11}

\subsection*{Proof of Lemma \ref{lem:secondPart}}

\begin{Applemma}
\label{appendix:lem:secondPart}
Let $G[W']$ be a subgraph added to $\mathcal{W}$ by Phase 2 of 
Algorithm \ref{algo:approxkNotConstant}. Then, 
$dens(G[W']) \geq \frac{1}{2} dens(G[W_j])$, with 
$G[W_j]$ a subgraph added to $\mathcal{W}$ by Phase 1 of Algorithm \ref{algo:approxkNotConstant}.
\end{Applemma}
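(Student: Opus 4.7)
The plan is to do a case analysis on the form of $G[W']$ produced by Phase~2 of Algorithm~\ref{algo:approxkNotConstant} and, in each case, identify the specific Phase~1 subgraph $G[W_j]$ that witnesses the $\tfrac{1}{2}$-density bound. By construction, $W'$ is induced by one of (a) $W_i\cup\{v\}$, (b) $W_j\cup\{u\}$, or (c) $W_j\setminus\{w\}$ with $w\in W_{i,j}$, where the last is only allowed when $|W_{i,j}|\geq 4$.

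The vertex-addition cases (a) and (b) are settled by a direct density computation. Consider for instance $W' = W_j\cup\{u\}$ with $u\in V\setminus W_j$: since including a new vertex can only add edges, $|E(W')|\geq |E(W_j)|$, and $|W'| = |W_j|+1$, hence
\[
dens(W') \;\geq\; \frac{|E(W_j)|}{|W_j|+1} \;=\; \frac{|W_j|}{|W_j|+1}\,dens(W_j) \;\geq\; \tfrac{1}{2}\,dens(W_j),
\]
the last inequality using $|W_j|\geq 1$. The witness is $W_j$ itself; the other subcase is symmetric with $W_i$ in place of $W_j$.

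The vertex-removal case (c) is the main step. Algorithm~\ref{algo:approxkNotConstant} inserts the densest candidates first, so among the subgraphs $W_j\setminus\{w\}$ for $w\in W_{i,j}$ the densest corresponds to choosing $w^*$ that minimises the degree $d_{G[W_j]}(w)$ in $G[W_j]$. The key estimate combines the handshaking identity with the hypothesis $|W_{i,j}|\geq 4$:
\[
d_{G[W_j]}(w^*) \;\leq\; \frac{\sum_{w\in W_{i,j}} d_{G[W_j]}(w)}{|W_{i,j}|} \;\leq\; \frac{2|E(W_j)|}{|W_{i,j}|} \;\leq\; \frac{|E(W_j)|}{2},
\]
where the middle inequality uses $\sum_{w\in W_j} d_{G[W_j]}(w) = 2|E(W_j)|$ and $W_{i,j}\subseteq W_j$. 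Substituting,
\[
dens(W_j\setminus\{w^*\}) \;=\; \frac{|E(W_j)|-d_{G[W_j]}(w^*)}{|W_j|-1} \;\geq\; \frac{|E(W_j)|/2}{|W_j|-1} \;>\; \frac{|E(W_j)|}{2|W_j|} \;=\; \tfrac{1}{2}\,dens(W_j),
\]
again with $W_j$ as the witness.

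The hard part is that the algorithm may insert several subgraphs of type (c), not only the densest one, so in principle a less dense choice $W_j\setminus\{w\}$ with $d_{G[W_j]}(w)$ large could violate the bound. The plan to handle this is a pigeonhole argument driven by the same degree-sum identity: the number of $w\in W_{i,j}$ with $d_{G[W_j]}(w) > \tfrac{(|W_j|+1)|E(W_j)|}{2|W_j|}$ is strictly less than $\tfrac{4|W_j|}{|W_j|+1}<4$, so at most three vertices of $W_{i,j}$ are ``bad''. Combined with the $\geq|V\setminus W_{i,j}|-1$ vertex-addition candidates in the pool (all automatically meeting the bound, by the computation above), the densest-first selection rule ensures that these at-most-three bad choices are passed over in favour of the many good vertex-addition and vertex-removal alternatives, so every $W'$ actually chosen by Phase~2 satisfies $dens(W')\geq \tfrac{1}{2}\,dens(W_j)$ for its associated Phase~1 witness.
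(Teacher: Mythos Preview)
Your vertex-addition cases match the paper exactly. For the vertex-removal case $W'=W_j\setminus\{w\}$, the paper takes a different path: rather than isolating the minimum-degree candidate and then counting bad vertices, it bounds the aggregate
\[
\sum_{w\in W_{i,j}} dens(G[W_j\setminus\{w\}]) \;\geq\; (|W_{i,j}|-2)\,dens(G[W_j])
\]
by tracking, for each edge of $G[W_j]$, how many of the $|W_{i,j}|$ deletions it survives (an edge inside $W_{i,j}$ survives all but two, an edge from $W_{i,j}$ to $W_j\setminus W_{i,j}$ survives all but one, etc.), and then invokes $|W_{i,j}|\geq 4$ to obtain the factor $\tfrac12$ on average. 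From there it appeals to densest-first selection in one sentence. Your pigeonhole route is a dual way to extract the same averaging information, and you are right to notice that an averaging statement does not by itself give the per-subgraph conclusion.

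That said, your closing step does not actually close this gap. The densest-first rule orders candidates by density, not by your good/bad label. A ``good'' $W_i$-addition has density $\geq \tfrac12\,dens(G[W_i])$, which may well be \emph{smaller} than the density of a ``bad'' $W_j$-removal (the latter is only known to be $<\tfrac12\,dens(G[W_j])$, and nothing ties $dens(G[W_i])$ to $dens(G[W_j])$). So the algorithm can perfectly well select a bad $W_j$-removal ahead of many good $W_i$-additions, and your assertion that the bad ones are ``passed over'' does not follow. Even if you restrict to the candidates guaranteed to out-rank every bad removal---the $W_j$-additions together with the good $W_j$-removals---the count you obtain need not cover all $k-t$ slots once you subtract those that may already lie in $\mathcal{W}$. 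To make this line of argument work you would need either a sharper count showing the bad removals are never reached, or a separate argument that a removal that is bad for the witness $W_j$ still satisfies the lemma with some other Phase~1 witness.
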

\begin{proof}
Consider $G[W_j]$ and $G[W_i]$, two subgraphs 
added to $\mathcal{W}$ by Phase 1 of Algorithm \ref{algo:approxkNotConstant}, and 
$W_{i,j} = W_i \cap W_j$.
Consider the case that $|W_{i,j}| \leq 3$.
The density of a subgraph induced by 
$W' = W_j \cup \{u\}$,
added by Phase 2 of 
Algorithm \ref{algo:approxkNotConstant} can
be bounded as follows:
\[
dens(G[W']) \geq \frac{E(W_j)}{|W_j|+1}
\geq \frac{E(W_j)}{|W_j|} \frac{|W_j|}{|W_j|+1} = dens(W_j) \frac{|W_j|}{|W_j|+1} \geq \frac{1}{2} dens(G[W_j])
\]
as $|W_j| \geq 1$.

Similarly, if $W' = W_i \cup \{u\}$ then 
\[
dens(G[W']) \geq \frac{1}{2} dens(G[W_i]).
\]


Now, consider the case that 
$|W_{i,j}| \geq  4$.
For a subgraph  added by
Phase 2 of Algorithm \ref{algo:approxkNotConstant} to
$\mathcal{W}$ and induced by 
either $W_j \cup \{u\}$ or $W_i \cup \{v\}$, 
it holds the same argument of the case 
$|W_{i,j}| \leq  3$,
thus, it holds 
\[
dens(G[W']) \geq \frac{1}{2} dens(G[W_j]).
\]

Now, we consider the density of a subgraph
$G[W']$, with $W' = W_j \setminus \{u\}$, 
where $u \in W_{i,j}$, added to $\mathcal{W}$
by Phase 2 of Algorithm \ref{algo:approxkNotConstant}.
Consider the sum of the density of the
subgraphs $G[W']$ over the vertices 
$u \in W_{i,j}$:
\begin{small}
\[
\sum_{u \in W_{i,j}} dens(G[W']) = 
\sum_{u \in W_{i,j}} \frac{1}{|W_j|-1} 
\left(E(W_j \setminus W_{i,j}) + E(W_{i,j} \setminus \{u\})
+ E(W_j \setminus W_{i,j}, W_{i,j} \setminus \{u\})\right).
\]
\end{small}

Each edge $\{v,w\}$, with $v,w \in W_{i,j}$,
is not considered in the sum
\[
\sum_{u \in W_{i,j}} \frac{1}{|W_j|-1} E(W_{i,j} \setminus \{u\})
\]
at most twice, once for $u=v$ and once for $u=w$.
It follows that
\[
\sum_{u \in W_{i,j}} E(W_{i,j} \setminus \{u\}) \geq 
(|W_{i,j}|-2) E(W_{i,j}).
\]

Each edge $\{w,v\}$, with $v \in W_{i,j}$
and $w \in W_j \setminus W_{i,j}$,
is not considered in  
\[
\sum_{u \in W_{i,j}} \frac{1}{|W_j|-1} E(W_j \setminus W_{i,j}, W_{i,j} 
\setminus \{u\})
\]
at most once, when $u=v$, thus

\[
\sum_{u \in W_{i,j}} E(W_j \setminus W_{i,j}, W_{i,j} \setminus \{u\}) \geq 
(|W_{i,j}|-1) E(W_j \setminus W_{i,j}, W_{i,j} ).
\]

Thus
\begin{small}
\[
\sum_{u \in W_{i,j}} dens(G[W']) =  
\sum_{u \in W_{i,j}} \frac{1}{|W_j|-1} 
\left(E(W_j \setminus W_{i,j}) + E(W_{i,j} \setminus \{u\})
+ E(W_j \setminus W_{i,j}, W_{i,j} \setminus \{u\})\right) \geq
\]
\end{small}

\[ 
\frac{1}{|W_j|-1} 
(|W_{i,j}|
E(W_j \setminus W_{i,j}) + 
(|W_{i,j}|-2) E(W_{i,j})
+ (|W_{i,j}|-1)
E(W_j \setminus W_{i,j}, W_{i,j})) \geq
\]

\[
\frac{|W_{i,j}|-2}{|W_j|-1} 
(E(W_j \setminus W_{i,j}) +  E(W_{i,j})
+ E(W_j \setminus W_{i,j}, W_{i,j})). 
\]

Thus
\[
\sum_{u \in W_{i,j}} dens(G[W']) \geq
\frac{|W_{i,j}|-2}{|W_j|-1} (
E(W_j \setminus W_{i,j}) +  E(W_{i,j})
+ E(W_j \setminus W_{i,j}, W_{i,j})) \geq 
\]
\[
(|W_{i,j}|-2) (dens(G[W_j]))  
\]
as 
\[
dens(G[W_j]) = \frac{1}{|W_j|} 
\left(
E(W_j \setminus W_{i,j}) +  E(W_{i,j})
+ E(W_j \setminus W_{i,j}, W_{i,j}) \right) \leq
\]

\[
\frac{1}{|W_j|-1} 
\left(
E(W_j \setminus W_{i,j}) +  E(W_{i,j})
+ E(W_j \setminus W_{i,j}, W_{i,j}) \right).
\]


It follows that
\[
\sum_{u \in W_{i,j}} dens(G[W']) \geq
(|W_{i,j}|-2) (dens(G[W_j]))  =
\frac{(|W_{i,j}|-2)}{|W_{i,j}|} |W_{i,j}| (dens(G[W_j])).
\]

Since $|W_{i,j}|\geq 4$, it follows
that $\frac{(|W_{i,j}|-2)}{(|W_{i,j}|} \geq \frac{1}{2}$,
thus 
\[
\sum_{u \in W_{i,j}} dens(G[W']) \geq 
\frac{1}{2} \sum_{x \in W_{i,j}} dens(G[W_j])
\]
as
$
\sum_{u\in W_{i,j}} dens(G[W_j]) = 
|W_{i,j}| dens(G[W_j])$.

Since, Algorithm \ref{algo:approxkNotConstant} adds the 
$h$ most dense subgraphs 
among
the choice of $u \in W_{i,j}$ so that $|\mathcal{W}|=k$, 
this completes the proof.
\end{proof}

\setcounter{Applemma}{12}

\subsection*{Proof of Lemma \ref{lem:GeneralApproxLemma}}

\begin{Applemma}
\label{appendix:lem:GeneralApproxLemma}
Let $\mathcal{W}=  \{ G[W_1], \dots, G[W_k]  \}$ be the solution
returned by Algorithm \ref{algo:approxkNotConstant} 
and let
$\mathcal{W}^o=  \{ G[W^o_1], \dots, G[W^o_k]  \}$ be an optimal solution of \TOPK{} over instance $G$.
Then
$\sum_{i=1}^{k} dens(G[W_i]) \geq \frac{1}{2} \sum_{i=1}^{k} dens(G[W^o_i]).
$
\end{Applemma}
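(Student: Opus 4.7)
The plan is to decompose $\sum_{i=1}^{k} dens(G[W_i])$ into the contribution of the $t$ subgraphs added during Phase~1 and the $k-t$ subgraphs added during Phase~2, bounding each part separately against a portion of $\sum_{i=1}^{k} dens(G[W^o_i])$. I will index so that the Phase 1 subgraphs are $G[W_1],\dots,G[W_t]$ in non-increasing order of density (they come out in this order because at each iteration of Phase 1 the feasible set of distinct subgraphs shrinks, so the optimum $\DENSEK$ value is non-increasing). Without loss of generality I also assume $dens(G[W^o_1])\geq\dots\geq dens(G[W^o_k])$.

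For the Phase 1 contribution I intend to mirror the pigeonhole induction used in the proof of Lemma~\ref{lem:ApproxRkconstant}. At iteration $i\leq t$ the set $\mathcal{W}$ contains $i-1$ subgraphs, so by the pigeon principle at least one of $G[W^o_1],\dots,G[W^o_i]$, say $G[W^o_p]$ with $p\leq i$, is distinct from all of them. Since Theorem~\ref{teo:OptPhase1} certifies that $G[W_i]$ is an optimal $\DENSEK$ solution with respect to the current $\mathcal{W}$, we obtain $dens(G[W_i])\geq dens(G[W^o_p])\geq dens(G[W^o_i])$. Summing over $i=1,\dots,t$ yields $\sum_{i=1}^{t} dens(G[W_i])\geq \sum_{i=1}^{t} dens(G[W^o_i])$; in the subcase $t=k$ the lemma is already proved (with an even better ratio than $\tfrac{1}{2}$).

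For the Phase 2 contribution (when $t<k$), Lemma~\ref{lem:secondPart} guarantees, for each of the $k-t$ subgraphs $G[W']$ added by Phase 2, a bound of the form $dens(G[W'])\geq \tfrac{1}{2} dens(G[W_j])$ for some Phase 1 index $j\in\{1,\dots,t\}$. Monotonicity of the Phase 1 densities gives $dens(G[W_j])\geq dens(G[W_t])$, and combining the Phase 1 inequality $dens(G[W_t])\geq dens(G[W^o_t])$ with the sorted order of $\mathcal{W}^o$ we get $dens(G[W_t])\geq dens(G[W^o_\ell])$ for every $\ell\geq t$. Hence each of the $k-t$ Phase 2 densities is at least $\tfrac{1}{2} dens(G[W^o_\ell])$ for any $\ell>t$, and matching them (in any order) with $G[W^o_{t+1}],\dots,G[W^o_k]$ produces $\sum_{i=t+1}^{k} dens(G[W_i])\geq \tfrac{1}{2}\sum_{i=t+1}^{k} dens(G[W^o_i])$. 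Adding the two partial bounds and using $\sum_{i=1}^{t} dens(G[W^o_i])\geq \tfrac{1}{2}\sum_{i=1}^{t} dens(G[W^o_i])$ yields the desired inequality.

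The step that requires the most care is this linking argument in the Phase 2 half: turning the per-subgraph guarantee of Lemma~\ref{lem:secondPart}, which is stated relative to an unspecified Phase 1 subgraph, into a uniform comparison against the tail $dens(G[W^o_{t+1}]),\dots, dens(G[W^o_k])$. Once the monotone orderings of $\mathcal{W}$ and $\mathcal{W}^o$ are fixed and the Phase 1 pigeonhole inequalities $dens(G[W_i])\geq dens(G[W^o_i])$ are in place, the rest of the argument is essentially bookkeeping.
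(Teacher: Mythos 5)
Your proof is correct and rests on exactly the same ingredients as the paper's: the pigeonhole comparison against the sorted densities of $\mathcal{W}^o$, the optimality of the Phase~1 subgraphs as $\DENSEK$ solutions (Theorem~\ref{teo:OptPhase1}), and Lemma~\ref{lem:secondPart} to tie each Phase~2 subgraph to a Phase~1 subgraph within a factor $\frac{1}{2}$. The paper packages this chaining as an induction on $k$, routing the last term through the $\DENSEK$ optimum over $\{G[W_1],\dots,G[W_{k-1}]\}$, while you unroll it into a direct Phase~1/Phase~2 term-by-term matching via $dens(G[W_t])\geq dens(G[W^o_t])$; the substance is the same.
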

\begin{proof}

We prove the lemma by induction on $k$.
When $k=1$, since $G[W_1]$ is a densest subgraph of $G$,
it follows that $dens(G[W_1]) \geq \frac{1}{2} dens(G[W^o_1])$.

Assume that the lemma holds for $ k -1$, we prove
that it holds for $k$. First, notice that
\[
\sum_{i=1}^k dens(G[W_i]) = \sum_{i=1}^{k-1} dens(G[W_i]) +
dens(G[W_k]).
\]
By induction hypothesis 
\[
\sum_{i=1}^{k-1} dens(G[W_i]) \geq \frac{1}{2} \sum_{i=1}^{k-1} dens(G[W^o_i]).
\]
We prove that there exists 
a subgraph $G[W_j]$ added to $\mathcal{W}$ by Phase 1 of Algorithm \ref{algo:approxkNotConstant} such that
$dens(G[W_k]) \geq \frac{1}{2} dens(G[W_j])$. This
property clearly holds if $G[W_k]$ is added to $\mathcal{W}$ by Phase 1 of Algorithm \ref{algo:approxkNotConstant}.
If $G[W_k]$ is added to $\mathcal{W}$ by Phase 2 
of Algorithm \ref{algo:approxkNotConstant},
by Lemma \ref{lem:secondPart} it follows that
there exists a subgraph $G[W_j]$ added to $\mathcal{W}$ by Phase 1 of Algorithm \ref{algo:approxkNotConstant} such that
$dens(G[W_k]) \geq \frac{1}{2} dens(G[W_j])$.

Consider an optimal solution $G[W'_k]$ of $\DENSEK$ over instance
$(G, $  $\{ G[W_1]$, $G[W_2],$ $\dots, G[W_{k-1}] \})$.
Since $G[W_j]$ is added to 
$\mathcal{W}$ by Phase 1 of Algorithm \ref{algo:approxkNotConstant},
by Theorem \ref{teo:OptPhase1} it follows that 
$dens(G[W_j]) \geq dens(G[W'_k])$.
Hence, 
\[
dens(G[W_k]) \geq \frac{1}{2} dens(G[W'_k]).
\]
Now, we claim that $dens(G[W'_k]) \geq dens(G[W^o_k])$.
Assume that this is not the case, and that $dens(G[W'_k]) < G[W^o_k]$.
Notice that at least one of 
$G[W^o_1]$, $G[W^o_2]$, $\dots$, $G[W^o_k]$
does not belong to the set $\{ G[W_1], G[W_2], \dots, G[W_{k-1}]\}$ of subgraphs.
Since the subgraphs are in non increasing order of density, it follows
that an optimal solution  of $\DENSEK$ over instance
$(G, $ $\{ G[W_1]$, $G[W_2]$, $\dots$, 
$G[W_{k-1}] \})$ 
is a subgraph of $G$ having density at least $dens(G[W^o_p])$, for some $p$ with $1 \leq p \leq k$, and that $dens(G[W^o_p]) \geq 
dens(G[W^o_k]) > dens(G[W'_k])$, 
contradicting the optimality of $G[W'_k]$.
Hence it must hold 
\[
dens(G[W_k]) \geq \frac{1}{2} dens(G[W'_k]) \geq 
\frac{1}{2} dens(G[W^o_k]).
\]

Now,
\[
\sum_{i=1}^k dens(G[W_i]) = \sum_{i=1}^{k-1} dens(G[W_i]) +
dens(G[W_k])  \geq \frac{1}{2} \sum_{i=1}^{k-1} dens(G[W^o_i]) +
\frac{1}{2} dens(G[W^o_k]) \geq
\]
\[
\frac{1}{2} \sum_{i=1}^{k} dens(G[W^o_i])
\]
thus concluding the proof.
\end{proof}

\setcounter{Applemma}{14}

\subsection*{Proof of Lemma \ref{lem:3Case1}}

\begin{Applemma}
\label{appendix:lem:3Case1}
Let $G_P=(V_P,E_P)$ be a graph instance of {\sf 3-Clique Partition}
and let 
$G=(V,E)$ be the corresponding graph instance of \TOPT{}.
Given three cliques $G_P[V_{P,1}]$, $G_P[V_{P,2}]$, $G_P[V_{P,3}]$ such that $V_{P,1}$, $V_{P,2}$, $V_{P,3}$ partition $V_P$,
we can compute in polynomial time 
a set $\mathcal{W}= \{G[V_1], G[V_2], G[V_3] \}$ 
such that
$r(\mathcal{W}) \geq \frac{|V| -3}{2} +  18|V|^3$.
\end{Applemma}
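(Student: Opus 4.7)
The plan is direct: take the natural solution given by the partition itself, namely $V_i := V_{P,i}$ for $i=1,2,3$, and show that this $\mathcal{W}=\{G[V_1],G[V_2],G[V_3]\}$ already meets the claimed bound. Since the reduction sets $G=G_P$ and the three sets $V_{P,i}$ induce cliques in $G_P$, we get three cliques in $G$ at essentially no computational cost. The verification splits into bounding the density term and the distance term separately.

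For the density term, each $G[V_i]$ is a clique on $|V_i|$ vertices, so $|E(V_i)|=\binom{|V_i|}{2}$ and therefore
\[
dens(G[V_i]) \;=\; \frac{\binom{|V_i|}{2}}{|V_i|} \;=\; \frac{|V_i|-1}{2}.
\]
Summing over $i=1,2,3$ and using $|V_1|+|V_2|+|V_3|=|V|$ because $(V_1,V_2,V_3)$ partitions $V$, I obtain
\[
dens(\mathcal{W}) \;=\; \sum_{i=1}^{3}\frac{|V_i|-1}{2} \;=\; \frac{|V|-3}{2}.
\]

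For the distance term, the three sets $V_1,V_2,V_3$ are pairwise disjoint, so $|V_i\cap V_j|=0$ for all $i\neq j$. In particular the $V_i$ are pairwise distinct (each is non-empty, since we may assume the partition consists of non-empty cliques as is standard for \textsf{3-Clique Partition}, and two disjoint non-empty sets cannot be equal). By the definition of $d$,
\[
d(G[V_i],G[V_j]) \;=\; 2 - \frac{|V_i\cap V_j|^2}{|V_i|\,|V_j|} \;=\; 2
\qquad\text{for } i\neq j.
\]
There are $\binom{3}{2}=3$ such pairs, so
\[
\lambda \sum_{i=1}^{2}\sum_{j=i+1}^{3} d(G[V_i],G[V_j]) \;=\; 3|V|^3 \cdot 3 \cdot 2 \;=\; 18|V|^3.
\]

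Combining the two contributions gives $r(\mathcal{W}) = \tfrac{|V|-3}{2} + 18|V|^3$, establishing the required inequality with equality. The computation is clearly polynomial (in fact trivial) once the partition is given. There is no real obstacle here; the lemma serves only as the easy direction of the reduction, and the content is just the observation that disjoint cliques maximize both components of $r$ simultaneously when the clique sum covers $V$.
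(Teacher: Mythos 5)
Your proposal is correct and follows essentially the same route as the paper: take $V_i = V_{P,i}$ under the identity reduction, compute $dens(\mathcal{W}) = \frac{|V|-3}{2}$ from the clique densities, and use pairwise disjointness to get $d(G[V_i],G[V_j])=2$ for the three pairs, yielding the $18|V|^3$ term. Your extra remark on non-emptiness of the parts (ensuring the subgraphs are distinct so the distance formula applies) is a minor point the paper leaves implicit, but it does not change the argument.
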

\begin{proof}
By construction the three subgraphs $G_P[V_{P,1}]$, $G_P[V_{P,2}]$, $G_P[V_{P,3}]$ of $G_P$ are disjoint.
Construct three subgraphs $G[V_1]$, $G[V_2]$, $G[V_3]$ of $G$ as follows:
\[
V_i = \{ u_j \in V_i: v_j \in V_{P,i}  \}
\]
It follows that $G[V_1]$, $G[V_2]$, $G[V_3]$ are disjoint and
that $V_1 \uplus V_2 \uplus V_3 = V$.
Hence
\[
r(\mathcal{W}) = dens(\mathcal{W})+ \lambda \sum_{i=1}^3 \sum_{j=i+1}^3 d(G[V_i],G[V_j]) 
\]
where 
\[
dens(\mathcal{W}) = dens(G[V_1])+ dens(G[V_2])+ dens(G[V_3])= 
\frac{|E_1|}{|V_1|} + \frac{|E_2|}{|V_2|} + \frac{|E_3|}{|V_3|}.
\]
Since $G_P[V_{P,i}]$, with $1 \leq i \leq 3$, is a clique and, by construction, $G[V_i]$
is also a clique, it follows that 
\[
\frac{|E_i|}{|V_i|} = \frac{|V_i|(|V_i|-1)}{2|V_i|} = \frac{|V_i|-1}{2}
\]
thus
\[
dens(G[V_1])+ dens(G[V_2])+ dens(G[V_3])= 
\frac{|V_1|-1}{2}+\frac{|V_2|-1}{2} +\frac{|V_3|-1}{2}= \frac{|V|-3}{2}.
\]
For each $i,j \in \{ 1,2,3 \}$, with $i \neq j$,
$G[V_i]$ and $G[V_j]$ are disjoint, hence:
\[
d(G[V_i],G[V_j])=2 
\]
Thus, $r(\mathcal{W}) \geq \frac{|V| -3}{2}+  18|V|^3$.
\end{proof}

\setcounter{Applemma}{15}

\subsection*{Proof of Lemma \ref{lem:3Case2}}

\begin{Applemma}
\label{appenix:lem:3Case2}
Let $G_P=(V_P,E_P)$ be a graph instance of $3$-Clique Partition and let 
$G=(V,E)$ be the corresponding graph instance of \TOPT{}.
Given  a solution $\mathcal{W}= \{G[V_1], G[V_2], G[V_3] \}$ of \TOPT{} on
instance $G$, 
with $r(\mathcal{W}) \geq \frac{|V| -3}{2}+  18|V|^3$,
we can compute in polynomial time three cliques
$G_P[V_{P,1}]$, $G_P[V_{P,2}]$, $G_P[V_{P,3}]$ of $G_P$ such that $V_{P,1}$, $V_{P,2}$, $V_{P,3}$  partition $V_P$.
\end{Applemma}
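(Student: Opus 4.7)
The plan is to exploit the large weight $\lambda = 3|V|^3$ assigned to the distance part to force the three subgraphs in $\mathcal{W}$ to be pairwise disjoint, and then to exploit the density lower bound to force each of them to be a clique whose vertex sets partition $V$.

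First, I will bound the possible ``loss'' in the distance sum whenever two subgraphs in $\mathcal{W}$ fail to be disjoint. If $V_i = V_j$ for some $i \neq j$, then $d(G[V_i],G[V_j]) = 0$ and the distance sum is at most $4$; otherwise, if $V_i \neq V_j$ but $V_i \cap V_j \neq \emptyset$, then $|V_i \cap V_j| \geq 1$ and $|V_i|, |V_j| \leq |V|$, so $d(G[V_i],G[V_j]) \leq 2 - 1/|V|^2$ and the sum of the three pairwise distances is at most $6 - 1/|V|^2$. In both cases, multiplying by $\lambda = 3|V|^3$ gives a deficit of at least $3|V|$ compared with the maximum $6\lambda = 18|V|^3$. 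On the other hand, since every induced subgraph of $G$ on $V_i$ vertices has at most $\binom{|V_i|}{2}$ edges, $dens(G[V_i]) \leq (|V_i|-1)/2$ and therefore $dens(\mathcal{W}) \leq 3(|V|-1)/2 = (3|V|-3)/2$. Because $(3|V|-3)/2 < 3|V|$, the density part cannot compensate for the deficit, contradicting $r(\mathcal{W}) \geq (|V|-3)/2 + 18|V|^3$. Hence $G[V_1]$, $G[V_2]$, $G[V_3]$ are pairwise disjoint.

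Once pairwise disjointness is established, the distance sum equals exactly $6$ and $\lambda \cdot 6 = 18|V|^3$, so the density bound reduces to
\[
dens(G[V_1]) + dens(G[V_2]) + dens(G[V_3]) \geq \frac{|V|-3}{2}.
\]
Combining this with the upper bound $dens(G[V_i]) \leq (|V_i|-1)/2$ and with $|V_1|+|V_2|+|V_3| \leq |V|$ (disjointness) yields
\[
\frac{|V|-3}{2} \leq \sum_{i=1}^{3} \frac{|V_i|-1}{2} = \frac{|V_1|+|V_2|+|V_3|-3}{2} \leq \frac{|V|-3}{2},
\]
forcing equality throughout. Equality in the first inequality implies that each $G[V_i]$ is a clique, and equality in the second implies $V_1 \uplus V_2 \uplus V_3 = V$.

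Setting $V_{P,i} = V_i$ for $i = 1,2,3$ therefore produces three cliques of $G_P$ whose vertex sets partition $V_P$, and this can clearly be done in polynomial time from $\mathcal{W}$. The main technical obstacle is the case analysis in the first paragraph, where the precise choice $\lambda = 3|V|^3$ must be shown to dominate every possible density contribution; once this is done, the rest is a tight ``sandwich'' argument.
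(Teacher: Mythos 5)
Your proof is correct and follows essentially the same route as the paper: the weight $\lambda = 3|V|^3$ forces pairwise disjointness by a contradiction on $r(\mathcal{W})$, and the remaining density bound $\frac{|V|-3}{2}$ then forces each $G[V_i]$ to be a clique and the sets $V_i$ to partition $V$. The only differences are cosmetic: you treat the case $V_i = V_j$ explicitly and compress the paper's two separate contradiction arguments (disjointness, then coverage) into one sandwich inequality, which is fine.
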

\begin{proof}
First, we have to prove that $G[V_1]$, $G[V_2]$, $G[V_3]$ are disjoint
and that $V_1 \uplus V_2 \uplus V_3 =V$.
Assume to the contrary
that two subgraphs in $\mathcal{W}$, w.l.o.g. $G[V_1]$ and $G[V_2]$,
share at least one vertex. Then
\[
d(G[V_1],G[V_2])=2 - \frac{|V_1 \cap V_2|^2}{|V_1||V_2|} \leq 2 - \frac{1}{|V|^2}.
\]
Since $dens(\mathcal{W}) \leq \frac{3(|V|-1)}{2}$, as $\frac{|E_i|}{|V_i|} \leq \frac{|V_i|-1}{2} \leq \frac{|V|-1}{2}$, and $\lambda=3|V|^3$,
it follows that
\[
r(\mathcal{W}) \leq 3\frac{|V| -1}{2} + 12 |V|^3 + 3|V|^3(2 - \frac{1}{|V|^2})
< 3 |V| + 18 |V|^3 - 3|V|= 18 |V|^3 
\]
Thus $r(\mathcal{W}) < \frac{|V| -3}{2} +  18|V|^3$, contradicting the hypothesis that 
$r(\mathcal{W}) \geq \frac{|V| -3}{2} +  18|V|^3$. Thus we can assume
that $G[V_1]$, $G[V_2]$, $G[V_3]$ are disjoint.

Now, we show that $V_1 \uplus V_2 \uplus V_3 =V$. Assume that this is not
the case. Let $dens(G[V_i]) = z_i$, with $1 \leq i \leq 3$. 
Since $G[V_1]$, $G[V_2]$, $G[V_3]$ are disjoint, it follows
that $z_1 + z_2 + z_3 < \frac{|V| -3}{2}$, thus 
$r(\mathcal{W}) < \frac{|V| -3}{2} +  18|V|^3$ contradicting the
hypothesis that
$r(\mathcal{W}) \geq \frac{|V| -3}{2} +  18|V|^3$. 
Moreover, notice that,
since $r(\mathcal{W}) \geq \frac{|V| -3}{2} +  18|V|^3$,
$dens(\mathcal{W}) \geq \frac{|V| -3}{2}$, thus each $G[V_i]$, with
$1 \leq i \leq 3$,
is a clique in $G$.

Now, we define $G_P[V_{P,1}]$, $G_P[V_{P,2}]$, $G_P[V_{P,3}]$:
\[
V_{P,i} = \{ v_j : u_j \in V_{i}  \}
\]
By construction of $G$, it follows that $G[V_{P,1}]$, $G[V_{P,2}]$, $G[V_{P,3}]$ are disjoint, $V_{P,1} \uplus V_{P,2} \uplus V_{P,3} = V_P$ and that $G[V_{P,i}]$, 
with $1 \leq i \leq 3$, is a clique.
\end{proof}

\end{document}